\documentclass[amsmath,amssymb]{revtex4}
\usepackage{graphicx}% Include figure files
\usepackage{amscd,amsmath,amsthm,amsfonts,amssymb}
\usepackage[colorlinks,linkcolor=red,anchorcolor=blue,citecolor=green]{hyperref}
\usepackage{ytableau}
\usepackage{extarrows}
\newtheorem{lem}{Lemma}
\newtheorem{thm}{Theorem}
\makeatletter \@addtoreset{equation}{section} \makeatother

\def\[{\begin{equation}}
\def\]{\end{equation}}

\begin{document}
\title{Rogue-wave and lump patterns associated with the third Painlev\'{e} equation}
\author{%%%% Author details
Bo Yang$^{1}$, Jianke Yang$^{2}$}
%%%%%%%%% Insert author address here
\address{$^{1}$ School of Mathematics and Statistics, Ningbo University, Ningbo 315211, China\\
$^{2}$ Department of Mathematics and Statistics, University of Vermont, Burlington, VT 05401, U.S.A}
\begin{abstract}
We report rogue-wave and lump patterns associated with Umemura polynomials, which arise in rational solutions of the third Painlev\'{e} equation. We first show that in many integrable equations such as the nonlinear Schr\"odinger equation and the Boussinesq equation, when internal parameters of their rogue wave solutions are large and of certain form, then their rogue patterns in the spatial-temporal plane can be asymptotically predicted by root distributions of Umemura polynomials (or equivalently, pole distributions of rational solutions to the third Painlev\'{e} equation). Specifically, every simple root of the Umemura polynomial would induce a fundamental rogue wave whose spatial-temporal location is linearly related to that simple root, while a multiple root of the Umemura polynomial would induce a non-fundamental rogue wave in the $O(1)$ neighborhood of the spatial-temporal origin. Next, we show that in a certain class of higher-order lump solutions of the Kadomtsev-Petviashvili-I (KPI) equation, when their internal parameters are large and of certain form, then their lump patterns at $O(1)$ time can also be predicted asymptotically by root distributions of Umemura polynomials, where simple and multiple roots of the polynomial would give rise to fundamental and non-fundamental lumps in the spatial plane, respectively. These results reveal the importance of the third Painlev\'{e} equation in studies of nonlinear wave patterns. We also report a new transformation which turns bilinear rogue-wave solutions of the nonlinear Schr\"odinger equation to higher-order lump solutions of the KPI equation.
\end{abstract} 
\maketitle

\section{Introduction}
Pattern formation of nonlinear waves is a fascinating and important field of applied mathematics. These patterns represent interesting and distinctive behaviors of nonlinear wave solutions that can often be related to actual physical phenomena, and theoretical analysis of these patterns can lead to deep and universal mathematical objects.

In recent years, many nonlinear wave patterns were linked to pole distributions of solutions to certain Painlev\'{e} equations. Painlev\'{e} equations are certain nonlinear second-order ordinary differential equations in the complex plane possessing the classical Painlev\'{e} property, namely, that their solutions have no movable critical singularities (such as movable branch points or logarithmic singularities), but which are not generally solvable in terms of elementary functions. There are six types of Painlev\'{e} equations, $\mbox{P}_{\mbox{\scriptsize I}}$, $\mbox{P}_{\mbox{\scriptsize II}}$, $\dots$, and $\mbox{P}_{\mbox{\scriptsize VI}}$, which were classified by Painlev\'{e} and Gambier \cite{Painleve1900, Painleve1902, Gambier1910}. These six Painlev\'{e} equations furthermore satisfy the stronger property that all movable singularities of their solutions are poles. An introduction to these Painlev\'{e} equations can be found in \cite{ClarksonNIST}.

So far, nonlinear wave patterns have been linked to pole distributions of solutions to $\mbox{P}_{\mbox{\scriptsize I}}$, $\mbox{P}_{\mbox{\scriptsize II}}$ and $\mbox{P}_{\mbox{\scriptsize IV}}$ (to the authors' best knowledge). Specifically, in the semiclassical nonlinear Schr\"odinger (NLS) equation after wave breaking, a cascade of Peregrine rogue waves appear, and their locations are determined by poles of the tritronqu\'ee solution to $\mbox{P}_{\mbox{\scriptsize I}}$ \cite{Tovbis}. In the semiclassical sine-Gordon equation with initial conditions near the separatrix of a simple pendulum, superluminal (infinite velocity) kinks that appear in the solution were linked to real-axis poles of rational solutions to $\mbox{P}_{\mbox{\scriptsize II}}$ \cite{Miller_sine}. In many integrable equations such as the NLS equation, the Boussinesq equation, the Manakov system and the three-wave interaction system, there exist rogue wave solutions that ``come from nowhere and leave with no trace", i.e., they arise from a uniform background, reach much higher amplitude in local space, and then retreat to the same background again. When one of their internal parameters is large, the rogue field would exhibit geometric patterns formed by fundamental rogue waves, and spatial-temporal locations of these fundamental rogue waves have been linked to pole distributions of rational solutions to $\mbox{P}_{\mbox{\scriptsize II}}$ and $\mbox{P}_{\mbox{\scriptsize IV}}$, or equivalently, to root distributions of Yablonskii-Vorob'ev polynomials and Okamoto polynomials \cite{YangNLS2021, YangOkamoto2023, Yangbook2024}. In the Sasa-Satsuma equation, there exist a class of partial-rogue waves that ``come from nowhere but leave with a trace''. Analytical prediction of these partial-rogue waves has been linked to pole distributions of rational solutions to $\mbox{P}_{\mbox{\scriptsize IV}}$, or equivalently, to root distributions of generalized Okamoto polynomials \cite{YangPartialRogue}. In the Kadomtsev-Petviashvili-I (KPI) equation, there exist higher-order lump solutions that describe anomalous scatterings of fundamental lumps with the same asymptotic velocities. If internal parameters in those higher-order lumps meet certain conditions, then at large time, they would exhibit geometric patterns comprising fundamental lumps, whose spatial locations have been linked to pole distributions of rational solutions to $\mbox{P}_{\mbox{\scriptsize II}}$ and $\mbox{P}_{\mbox{\scriptsize IV}}$,  or equivalently, to root distributions of Yablonskii-Vorob'ev polynomials, generalized Hermite polynomials and generalized Okamoto polynomials (the latter two are special classes of Wronskian-Hermite polynomials) \cite{YangYangKPI, Yangbook2024,Clarkson2003PIV}.

In this paper, we study rogue-wave and lump patterns that are linked to pole distributions of rational solutions to $\mbox{P}_{\mbox{\scriptsize III}}$, or equivalently, to root distributions of Umemura polynomials since those rational solutions of $\mbox{P}_{\mbox{\scriptsize III}}$ are expressed through Umemura polynomials. On rogue patterns, we show that in many integrable equations such as the NLS equation and the Boussinesq equation, when internal parameters of their rogue wave solutions are large and of certain form, then their rogue patterns in the spatial-temporal plane can be asymptotically predicted by root structures of Umemura polynomials, where simple and multiple roots of these polynomials would induce fundamental and non-fundamental rogue wave components in the wavefield, respectively. On lump patterns, we show that in a certain class of higher-order lump solutions of KPI, when their internal parameters are large and of certain form, then their wavefields at $O(1)$ time can also be predicted asymptotically by root structures of Umemura polynomials. We compare these asymptotic predictions to true solutions and demonstrate good agreement between them. These results reveal the importance of $\mbox{P}_{\mbox{\scriptsize III}}$ in studies of rogue wave patterns and lump patterns. Along the way, we also report a new transformation which converts bilinear rogue-wave solutions of the NLS equation to higher-order lump solutions of the KPI equation.

This paper is structured as follows. In Sec.~\ref{secPIII}, we introduce the third Painlev\'{e} equation $\mbox{P}_{\mbox{\scriptsize III}}$ and its associated Umemura polynomials. We also review properties of these polynomials and their root structures. In Sec.~\ref{sec:NLS}, we study rogue wave patterns in the NLS equation and the Boussinesq equation in certain parameter regimes, and show that those patterns can be analytically predicted by root distributions of Umemura polynomials. We also compare these predictions with true rogue patterns. In Sec.~\ref{sec:KP}, we study higher-order lumps of the KPI equation in certain parameter regimes, and show that their lump patterns can be analytically predicted by root distributions of Umemura polynomials as well. A transformation that converts NLS rogue wave solutions to KPI higher-order lump solutions is also presented here. Sec.~\ref{sec:conclusion} summarizes the paper with some additional discussions.

\section{Umemura polynomials associated with the third Painlev\'{e} equation}\label{secPIII}
In this section, we introduce Umemura polynomials associated with the third Painlev\'{e} equation and review their properties.

\subsection{Umemura polynomials in rational solutions of the third Painlev\'{e} equation}

The third Painlev\'{e} equation ($\mbox{P}_{\mbox{\scriptsize III}}$) is
\[\label{PainleveIII}
w''=\frac{(w')^2}{w} -\frac{w'}{z}+\frac{\alpha w^2 + \beta}{z} + \gamma w^3+\frac{\delta}{w},
\]
where the prime denotes differentiation with respect to $z$, and $\alpha,\ \beta,\ \gamma$, $\delta$ are arbitrary complex constants. In the generic case where $\gamma \delta \neq 0$, by rescaling $w$ and $z$, one can set $\gamma=1$ and $\  \delta=-1$ without loss of generality. Then, Eq.~(\ref{PainleveIII}) becomes
\[\label{PainleveIII1}
w''=\frac{(w')^2}{w} -\frac{w'}{z}+\frac{\alpha w^2 + \beta}{z} +  w^3 -\frac{1}{w}.
\]
It has been shown by Gromak \cite{Gromak1984} that this $\mbox{P}_{\mbox{\scriptsize III}}$ equation admits rational solutions if and only if $\alpha+ \varepsilon \beta= 4n$, where $\varepsilon=\pm 1$, $n \in \mathbb{Z}$, and $\mathbb{Z}$ denotes the set of all integers. It has also been shown by Murata \cite{Murata1995} that when this $\mbox{P}_{\mbox{\scriptsize III}}$ admits rational solutions, the number of rational solutions for each $(\alpha, \beta)$ is two or four, and four solutions arise if and only if there exist two integers $m$ and $n$ such that
$\alpha+ \beta= 4n$ and $\alpha- \beta= 4m$.

Rational solutions for the case of $\alpha+\beta=4n$ were derived by Umemura \cite{UmemuraAMS2001} and expressed through special polynomials in $1/z$. Clarkson \cite{Clarkson2003PIII} showed that these rational solutions can also be expressed through polynomials in $z$. Specifically, he introduced Umemura polynomials $U_n(z; \mu)$ through the recurrence relation
\[\label{polySn}
U_{n+1} U_{n-1}=-z\left[U_n U_n'' -  \left(U_n'\right)^2  \right]-U_n U_n'+(z+\mu)U_n^2,
\]
with $U_{-1}(z;\mu)=U_{0}(z;\mu)=1$ and $\mu$ a free complex parameter. The polynomiality of $U_n(z; \mu)$ in $z$ is far from obvious from this recurrence relation but has been shown to be true \cite{Kajiwara1999,Clarkson2023PIII}. In terms of these Umemura polynomials, the function
\[ \label{wn}
w_n\equiv w(z; \alpha_n, \beta_n)=1+\frac{d}{dz} \ln \left\{\frac{U_{n-1}(z;\mu-1)}{U_n(z;\mu)}\right\} \equiv \frac{U_n(z;\mu-1)U_{n-1}(z;\mu)}{U_n(z;\mu) U_{n-1}(z;\mu-1)}
\]
will satisfy the $\mbox{P}_{\mbox{\scriptsize III}}$ equation (\ref{PainleveIII1}) with $\alpha_n=2n+2\mu-1$, $\beta_n=2n-2\mu+1$, and $n \in \mathbb{Z}$ (see \cite{Clarkson2003PIII}). The other rational solution of $\mbox{P}_{\mbox{\scriptsize III}}$ for $\alpha_n=2n+2\mu-1$ and $\beta_n=2n-2\mu+1$ can be obtained from the above $w_n$ solution through the B\"acklund transformation that if $w(z; \alpha, \beta)$ is a solution of the $\mbox{P}_{\mbox{\scriptsize III}}$ equation (\ref{PainleveIII1}), then so is $-1/w(z; \beta, \alpha)$ \cite{Clarkson2003PIII,Kajiwara2003}.
Rational solutions of the $\mbox{P}_{\mbox{\scriptsize III}}$ equation (\ref{PainleveIII1}) for the case of $\alpha-\beta=4n$ can be obtained from those of $\alpha+\beta=4n$ through another B\"acklund transformation that if $w(z; \alpha, \beta)$ is a solution of the $\mbox{P}_{\mbox{\scriptsize III}}$ equation (\ref{PainleveIII1}), then so is ${\rm i}w({\rm i}z; \alpha, -\beta)$ \cite{Clarkson2003PIII,Kajiwara2003}.

\subsection{Determinant form of Umemura polynomials}
For our studies of rogue-wave and lump patterns, we need to put these Umemura polynomials in a certain determinant form. One determinant expression has been obtained by Kajiwara and Masuda~\cite{Kajiwara1999} (see also \cite{Clarkson2003PIII}). Let $\hat{p}_{k}(z;\mu)$ be the Schur polynomial defined by the generating function
\begin{equation}\label{polypkzmu}
\sum_{k=0}^{\infty} \hat{p}_{k}(z;\mu) \epsilon^k =(1+ \epsilon)^\mu \exp\left( z \epsilon \right),
\end{equation}
with $\hat{p}_k(z;\mu)\equiv 0$ for $k<0$.  Then,  the Umemura polynomial $U_{n}(z;\mu)$ for a positive integer $n$ is given by the $n \times n$ determinant \cite{Kajiwara1999,Clarkson2003PIII}
\[\label{DetUmemura0}
U_n(z;\mu)=c_n \left| \begin{array}{cccc}
         \hat{p}_1(z;\mu) & \hat{p}_0(z;\mu) & \cdots &  \hat{p}_{2-n}(z;\mu) \\
         \hat{p}_3(z;\mu) & \hat{p}_2(z;\mu) & \cdots & \hat{p}_{4-n}(z;\mu) \\
        \vdots& \vdots & \ddots & \vdots \\
         \hat{p}_{2n-1}(z;\mu) & \hat{p}_{2n-2}(z;\mu) & \cdots & \hat{p}_n(z;\mu)
       \end{array} \right|,
\]
where $c_n=\prod_{j=1}^n (2j+1)^{n-j}$.

However, to link rogue and lump patterns to Umemura polynomials, we need to modify the above Schur polynomial $\hat{p}_k(z; \mu)$. The reason for this modification will be seen in later sections. To modify, we first rewrite Eq.~(\ref{polypkzmu}) as
\begin{equation}\label{polypkzmu1}
\sum_{k=0}^{\infty} \hat{p}_{k}(z;\mu) \epsilon^k = \exp\left[ z \epsilon + \mu\log(1+\epsilon) \right] = \exp\left( z \epsilon + \mu \sum_{j=1}^{\infty} \frac{(-1)^{j-1}}{j} \epsilon^{j} \right).
\end{equation}
From this equation, we can see that the Umemura polynomial $U_{n}(z;\mu)$ in Eq.~(\ref{DetUmemura0}) can be defined alternatively as
\[\label{DetUmemura}
U_n(z;\mu)=c_n \left| \begin{array}{cccc}
         p_1(z;\mu) & p_0(z;\mu) & \cdots &  p_{2-n}(z;\mu) \\
         p_3(z;\mu) & p_2(z;\mu) & \cdots & p_{4-n}(z;\mu) \\
        \vdots& \vdots & \ddots & \vdots \\
         p_{2n-1}(z;\mu) & p_{2n-2}(z;\mu) & \cdots & p_n(z;\mu)
       \end{array} \right|,
\]
where the new Schur polynomial $p_k(z; \mu)$ is given by the following generating function
\begin{equation}\label{polypkzmu2}
\sum_{k=0}^{\infty} p_{k}(z;\mu) \epsilon^k = \exp\left( z \epsilon + \mu \sum_{j=1}^{\infty} \frac{1}{2j-1} \epsilon^{2j-1} \right)= \exp\left( (z+\mu) \epsilon +  \sum_{j=1}^{\infty} \frac{\mu}{2j+1} \epsilon^{2j+1} \right),
\end{equation}
in which all the even-power terms of $\epsilon$ in the right-side exponent of Eq.~(\ref{polypkzmu1}) have been dropped. The reason is due to the structure of the determinant in Eq.~(\ref{DetUmemura0}) with index jumps of two. For such a determinant, it is easy to show that the partial derivative of $U_n(z;\mu)$ with respect to the even-power coefficient of $\epsilon$ on the right side of Eq.~(\ref{polypkzmu1}) is zero. Thus, those even-power terms of $\epsilon$ do not affect $U_n(z;\mu)$ and can be dropped, leading to the equivalent generating function (\ref{polypkzmu2}). In the rest of this article, we will use the modified determinant form (\ref{DetUmemura}) for the Umemura polynomial $U_{n}(z;\mu)$. This determinant is a Wronskian since we can see from Eq.~(\ref{polypkzmu2}) that $p'_{j}(z; \mu)=p_{j-1}(z; \mu)$, where the prime denotes differentiation with respect to $z$.

\subsection{Properties of Umemura polynomials and their roots}
The $U_{n}(z;\mu)$ polynomial is monic with degree $n(n+1)/2$, which can be seen from its determinant expression (\ref{DetUmemura}). It also admits the following symmetry property~\cite{Clarkson2003PIII}:
\[\label{SymmetrySn}
U_n(-z; -\mu)=(-1)^{n(n+1)/2}U_n(z; \mu),
\]
which can be proved using the recurrence relation (\ref{polySn}). Here, the missing factor of $(-1)^{n(n+1)/2}$ in \cite{Clarkson2003PIII} has been added.

The first few Umemura polynomials are
\begin{eqnarray*}
&& U_1(z;\mu)=z+\mu, \\
&& U_2(z;\mu)=(z+\mu)^3-\mu, \\
&& U_3(z;\mu)=(z+\mu)^6 - 5\mu(z+\mu)^3 + 9\mu(z+\mu)-5\mu ^2,  \\
&& U_4(z;\mu)=(z+\mu)^{10}-15\mu(z+\mu)^7+63 \mu(z+\mu)^5-225 \mu (z+\mu)^3+315 \mu ^2(z+\mu)^2-175 \mu^3(z+\mu)+36 \mu ^2,  \\
&& U_5(z;\mu)=(z+\mu )^{15}-35 \mu  (z+\mu )^{12}+252 \mu  (z+\mu )^{10}+175 \mu ^2 (z+\mu )^9-2025 \mu(z+\mu)^8 + 945\mu^2(z+\mu)^7\\
&&\hspace{1.5cm}   -1225 \mu  \left(\mu ^2-9\right) (z+\mu )^6-26082 \mu ^2 (z+\mu )^5+33075 \mu ^3 (z+\mu )^4-350 \mu ^2 \left(36+35\mu^2\right) (z+\mu )^3\\
&&\hspace{1.5cm}   + 11340 \mu ^3 (z+\mu )^2+225 \mu ^2 \left(36-49 \mu ^2\right) (z+\mu )-5796 \mu ^3+6125 \mu ^5.
\end{eqnarray*}

Root structures of Umemura polynomials are important to us, since we will link them to rogue-wave and lump patterns in the later text. Root properties of these polynomials have been studied extensively in the literature
\cite{Clarkson2003PIII,Amdeberhan2006,Clarkson2023PIII}. First, Clarkson \cite{Clarkson2003PIII} and Amdeberhan \cite{Amdeberhan2006} reported a simple formula for the discriminant of $U_{n}(z;\mu)$, from which one can see that this polynomial for any positive integer $n$ admits multiple roots if and only if $\mu=0, \pm 1, \dots, \pm (n-2)$. Amdeberhan \cite{Amdeberhan2006} and Clarkson, Law \& Lin \cite{Clarkson2023PIII} also reported a simple formula for $U_{n}(0;\mu)$, from which one can see zero is a root of $U_{n}(z;\mu)$ when $\mu=0, \pm 1, \dots, \pm (n-1)$. In addition, Clarkson, Law \& Lin \cite{Clarkson2023PIII} proved that all nonzero roots of $U_{n}(z;\mu)$ are simple, and the multiplicity of the zero root is $(n-|\mu|)(n-|\mu|+1)/2$ when $\mu=0, \pm 1, \dots, \pm (n-1)$. Putting all these results together, we have the following lemma on the root properties of Umemura polynomials.

\begin{lem} (\cite{Clarkson2003PIII,Amdeberhan2006,Clarkson2023PIII})   \label{Lemma1}
For the Umemura polynomial $U_{N}(z;\mu)$ with $N$ a positive integer and $\mu$ a complex parameter,
\begin{enumerate}
\item if $\mu$ is equal to one of $0, \pm 1, \dots, \pm (N-1)$, then $U_{N}(z;\mu)$ has a zero root of multiplicity $N_0(N_0+1)/2$, where $N_0=N-|\mu|$. The nonzero roots are all simple, and their number is $ N_p=N(N+1)/2-N_0(N_0+1)/2$;
\item if $\mu\ne 0, \pm 1, \dots, \pm (N-1)$, then $U_{N}(z;\mu)$ has $N(N+1)/2$ nonzero simple roots only.
\end{enumerate}
\end{lem}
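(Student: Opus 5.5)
This lemma collects known results, so the plan is to derive it from the three facts quoted just before it.

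\emph{Fact (a)}, the discriminant formula of Clarkson and Amdeberhan: $U_N(z;\mu)$ has a multiple root if and only if $\mu\in\{0,\pm1,\dots,\pm(N-2)\}$.

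\emph{Fact (b)}, the closed formula for $U_N(0;\mu)$ of Amdeberhan and Clarkson--Law--Lin: it is, up to a nonzero constant, a product of factors $(\mu^2-j^2)^{m_j}$ together with a power of $\mu$. Hence $U_N(0;\mu)=0$ if and only if $\mu\in\{0,\pm1,\dots,\pm(N-1)\}$.

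\emph{Fact (c)}, from Clarkson--Law--Lin: all nonzero roots are simple, and when $\mu=0,\pm1,\dots,\pm(N-1)$ the zero root has multiplicity $(N-|\mu|)(N-|\mu|+1)/2$.

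\textbf{Case 2}, $\mu\notin\{0,\pm1,\dots,\pm(N-1)\}$. Fact (b) gives $U_N(0;\mu)\neq0$, so zero is not a root. Fact (a) gives that all roots are simple. Since $U_N$ is monic of degree $N(N+1)/2$, as read off from the Wronskian (\ref{DetUmemura}), there are exactly $N(N+1)/2$ distinct nonzero roots.

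\textbf{Case 1}, $\mu\in\{0,\pm1,\dots,\pm(N-1)\}$. Fact (c) gives the multiplicity $N_0(N_0+1)/2$ of the zero root, with $N_0=N-|\mu|$. It also gives simplicity of all nonzero roots. Counting with the degree then yields $N_p=N(N+1)/2-N_0(N_0+1)/2$.

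As a consistency check, the endpoint $|\mu|=N-1$ gives $N_0=1$, so zero is a simple root. This agrees with fact (a), which excludes multiple roots when $|\mu|=N-1$.

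If I had to prove fact (c) rather than cite it, I would proceed in three steps.

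\emph{Step 1: reduce to $\mu\ge0$.} Use the symmetry (\ref{SymmetrySn}).

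\emph{Step 2: compute the zero-root multiplicity.} Expand the Wronskian (\ref{DetUmemura}) near $z=0$ for integer $\mu=N-N_0$. For integer $\mu$, the generating function (\ref{polypkzmu}) is $(1+\epsilon)^\mu e^{z\epsilon}$, so $\hat p_k(z;\mu)$ is a finite sum of terms $\binom{\mu}{j}z^{k-j}/(k-j)!$. I would then write the Wronskian as a Wronskian of explicit functions of the form polynomial times $z^{\text{power}}$. The lowest-order term in $z$ can be identified through the partition/Schur-function expansion: the relevant partition is the $2$-core staircase $(N,N-1,\dots,1)$, and removing $|\mu|$ hooks should leave the staircase of size $N_0$. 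This would give the order $N_0(N_0+1)/2$ with a nonzero leading coefficient.

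\emph{Step 3: show nonzero roots are simple.} This is the genuinely hard step. I would use the recurrence (\ref{polySn}) and the Toda-type bilinear equation. If $z_0\neq0$ were a double root of $U_n$, the bilinear relation $U_{n+1}U_{n-1}=-z\,D_z^2(U_n\cdot U_n)/2+\dots$ together with the companion relation linking $U_n(z;\mu)$ and $U_n(z;\mu-1)$ would force common roots of consecutive polynomials. An induction argument excluding such common roots at $z_0\ne0$ then gives a contradiction; I expect this to be the main obstacle. Since all three facts are established in the cited works, the lemma follows by the case split above.
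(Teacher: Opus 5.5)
Your proposal is correct and follows the paper's own route: the paper does not prove the lemma independently but assembles it from the cited discriminant formula, the closed form of $U_N(0;\mu)$, and the Clarkson--Law--Lin results on simple nonzero roots and the zero-root multiplicity, with the counts then following from $\deg U_N=N(N+1)/2$. Your sketch of how to prove fact (c) is optional and not needed. One correction to its Step~2: the lowest power of $z$ equals the number of boxes of the staircase $(N,N-1,\dots,1)$ lying outside its first $|\mu|$ columns, and these boxes form a translated copy of the size-$N_0$ staircase; it is not obtained by removing hooks. The paper also gives an alternative derivation of this multiplicity, via the row echelon form of $\mathbf{H}$ in the proof of Theorem~\ref{Theorem1}.
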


This lemma gives very clear results regarding multiplicities of the zero and nonzero roots in Umemura polynomials.  These clear root results will lead to clear rogue-wave and lump pattern predictions associated with Umemura polynomials that we will derive later in this paper. For other special polynomials, such clear root-multiplicity results are not always known. For instance, for Yablonskii-Vorob'ev hierarchy polynomials, Okamoto hierarchy polynomials and Wronskian-Hermite polynomials, the simplicity of their nonzero roots is still a conjecture \cite{Clarkson2003-II,YangOkamoto2023,Felder2012}. Such uncertainty in root results adversely affected the clarity of rogue-wave and lump pattern predictions associated with those polynomials in \cite{YangNLS2021,YangOkamoto2023,YangYangKPI}.

In Fig.~\ref{fig1}, we show six root structures of the Umemura polynomial $U_{N}(z;\mu)$ for $N=5$ at various $\mu$ values. The upper row are root structures at $\mu=1, 2$ and 3. At these integer $\mu$ values, the roots contain $|\mu|$ concentric arcs of simple nonzero roots, plus a multiple zero root at the arc center. The lower row are root structures at $1/100, -8\textrm{i}$ and $7/4$. At these noninteger $\mu$ values, all roots are nonzero and simple, and they exhibit various shapes such as two concentric rings with a center (panel (d)), a triangle with a curved-in bottom (panel (e)), and two arcs plus a pentagon with a center (panel (f)). Most of these root structures have been shown in \cite{Clarkson2003PIII}, but the double-ring and curved triangle ones in panels (d, e) seem new. Root structures of $U_{N}(z;\mu)$ at large $N$ have been studied by Bothner, Miller \& Sheng \cite{Bothner2018} and Bothner \& Miller \cite{Bothner2020}.

\begin{figure}[htb]
\begin{center}
\includegraphics[scale=0.60, bb=420 0 325 530]{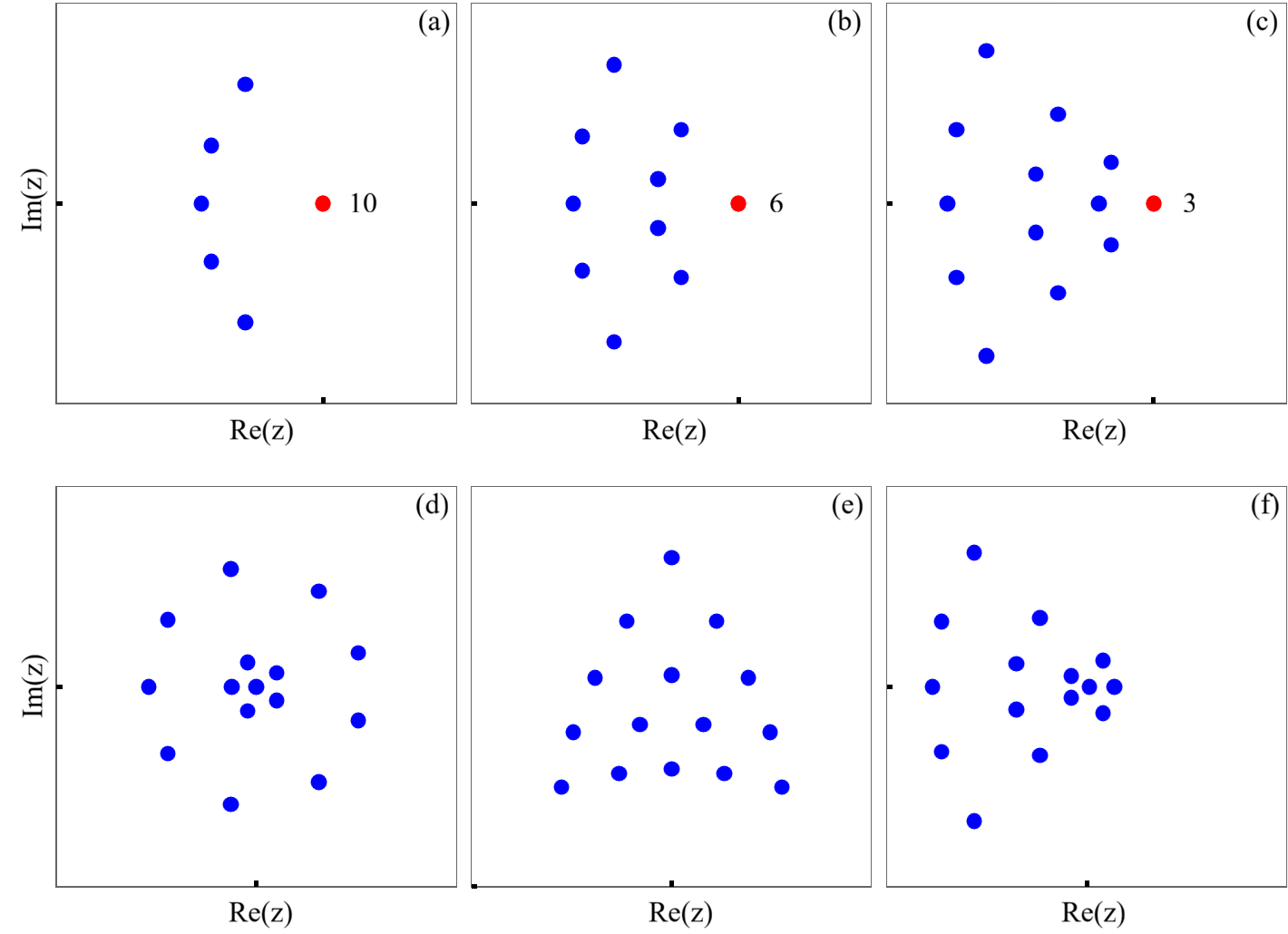}
\caption{Root structures of Umemura polynomials $U_{5}(z;\mu)$ at six $\mu$ values. Upper row: $\mu=1,2,3$ (from left to right). Lower row: $\mu=\frac{1}{100}, -8\textrm{i}, \frac{7}{4}$ (from left to right).  A simple root is marked by a blue dot, and a multiple zero root is marked by a red dot with its multiplicity indicated by a number beside it. In all upper panels, $-8 \leq \Re(z) \leq 4,\  -6 \leq  \Im(z) \leq 6$. In lower panels,
$-3 \leq \Re(z), \Im(z)  \leq 3$ in (d); $-9 \leq \Re(z) \leq 9,\ 0 \leq  \Im(z) \leq 18$ in (e); $-6 \leq \Re(z), \Im(z) \leq6$ in (f). Here, $\Re$ and $\Im$ represent the real and imaginary parts of a complex number, respectively. } \label{fig1}
\end{center}
\end{figure}

\subsection{Relations between Umemura polynomials and Adler-Moser polynomials}
Umemura polynomials $U_{n}(z;\mu)$ are related to certain special Adler-Moser polynomials. Adler-Moser polynomials were proposed by Adler and Moser \cite{Adler_Moser1978}, who expressed rational solutions of the Korteweg-de Vries equation in terms of these polynomials. These polynomials can be expressed through a determinant \cite{Adler_Moser1978,Clarkson2009}
\begin{equation}\label{AdlerMoserdef}
\Theta_{n}(z) = c_{n} \left| \begin{array}{cccc}
         \theta_{1}(z) &\theta_{0}(z) & \cdots &  \theta_{2-n}(z) \\
         \theta_{3}(z) & \theta_{2}(z) & \cdots &  \theta_{4-n}(z) \\
        \vdots& \vdots & \vdots & \vdots \\
         \theta_{2n-1}(z) & \theta_{2n-2}(z) & \cdots &  \theta_{n}(z)
       \end{array} \right|,
\end{equation}
where $\theta_{k}(z)$ are Schur polynomials defined by
\begin{equation}\label{AMthetak}
\sum_{k=0}^{\infty} \theta_k(z) \epsilon^k = \exp\left( z \epsilon + \sum_{j=1}^{\infty} \kappa_j \epsilon^{2j+1} \right),
\end{equation}
$\theta_{k}(z)\equiv 0$ if $k<0$, $c_{n}$ is given below Eq.~(\ref{DetUmemura0}), and $\kappa_j \hspace{0.04cm} (j\ge 1)$ are arbitrary complex constants. Adler-Moser polynomials reduce to the Yablonskii-Vorob'ev polynomial hierarchy when all $\kappa_j$ constants are set to zero except for one of them \cite{YangNLS2021}.

By comparing the determinant expression (\ref{DetUmemura}) of the Umemura polynomial to the above determinant expression (\ref{AdlerMoserdef}) of the Adler-Moser polynomial, we can easily see that the Umemura polynomial $U_{n}(z;\mu)$ is related to the special Adler-Moser polynomial $\Theta_{n}(z; \kappa_1, \kappa_2, \dots, \kappa_{n-1})$ with  $\kappa_j= \mu/(2j+1), 1\le j\le n-1$ as
\[ \label{UmemuraAD}
U_{n}(z;\mu) = \Theta_{n}(z+\mu).
\]
Thus, a root $z_0$ of the Umemura  polynomial corresponds to a root $z_0+\mu$ of this special Adler-Moser polynomial.

Due to this $\mu$-shift of the roots between Adler-Moser and Umemura polynomials, a zero root of the Umemura polynomial becomes a nonzero root of the Adler-Moser polynomial, and vice versa. This is an important distinction in the context of rogue-wave and lump studies. More will be said on this in Sec.~\ref{secDiff} later.

In the next two sections, we will link these Umemura polynomials of the $\mbox{P}_{\mbox{\scriptsize III}}$ equation to certain rogue-wave and lump patterns of integrable systems.

\section{Rogue wave patterns associated with Umemura polynomials in the NLS equation} \label{sec:NLS}

The NLS equation
\[ \label{NLS-2020}
\textrm{i} u_{t} +  \frac{1}{2}u_{xx}+ |u|^2 u=0
\]
arises in numerous physical situations such as water waves and optics~\cite{Benney,Zakharov,Hasegawa}. Its rogue-wave solutions have been derived in many articles before \cite{Peregrine1983,AAS2009,DGKM2010,KAAN2011,GLML2012,OhtaJY2012,YangNLS2021}. Some of those solutions have also been experimentally observed in water tanks \cite{Tank1,Tank2, Tank3, Tank4}, optical fibers \cite{Fiber1,Fiber1b}, plasma \cite{Plasma}, Bose-Einstein condensates \cite{RogueBEC}, acoustics \cite{Acoustics}, and so on.

The simplest expressions of general NLS rogue waves were derived by the bilinear method and presented in Ref.~\cite{YangNLS2021}. To present these bilinear rogue waves, we first introduce elementary Schur polynomials $S_j(\mbox{\boldmath $x$})$, where $\emph{\textbf{x}}=\left( x_{1}, x_{2}, \ldots \right)$. These polynomials are defined by
\begin{equation} \label{Schurdef}
\sum_{j=0}^{\infty}S_j(\mbox{\boldmath $x$}) \epsilon^j
=\exp\left(\sum_{i=1}^{\infty}x_i \epsilon^i\right).
\end{equation}
For example,
\[
S_0(\mbox{\boldmath $x$})=1, \quad S_1(\mbox{\boldmath $x$})=x_1,
\quad S_2(\mbox{\boldmath $x$})=\frac{1}{2}x_1^2+x_2, \quad
S_3(\mbox{\boldmath $x$})=\frac{1}{6}x_1^3+x_1x_2+x_3,
\]
and so on. We also define $S_j(\mbox{\boldmath $x$})\equiv 0$ when $j<0$.

Under these notations, compact expressions of NLS rogue waves of bilinear form are given by the following lemma.
\begin{lem}  \label{Lemma2}
( \cite{YangNLS2021}) General rogue waves of the NLS equation (\ref{NLS-2020}) under boundary conditions of $u(x, t)\to e^{\textrm{i}t}$ as $x, t\to \pm \infty$ are
\[ \label{uNform}
u_N(x,t)=\frac{\sigma_{1}}{\sigma_{0}}e^{\textrm{i}t},
\]
where $N$ is a positive integer representing the order of the rogue wave,
\begin{equation} \label{sigma_n}
\sigma_{n}=
\det_{\begin{subarray}{l}
1\leq i, j \leq N
\end{subarray}}
\left( \phi_{2i-1,2j-1}^{(n)} \right),
\end{equation}
\begin{equation} \label{phiijNLS}
\phi_{i,j}^{(n)}=\sum_{\nu=0}^{\min(i,j)} \frac{1}{4^{\nu}} \hspace{0.06cm} S_{i-\nu}\left[(\textbf{\emph{x}}^{+}(n) +\nu \textbf{\emph{s}}+\textbf{\emph{a}}\right]  \hspace{0.06cm} S_{j-\nu}\left[\textbf{\emph{x}}^{-}(n) + \nu \textbf{\emph{s}}+\textbf{\emph{a}}^*\right],
\end{equation}
vectors $\textbf{\emph{x}}^{\pm}(n)=( x_{1}^{\pm}, 0, x_{3}^{\pm}, 0, \cdots)$ are defined by
\[ \label{defxk}
x_{1}^{\pm}=x \pm \textrm{i} t \pm n,  \quad x_{2j+1}^{\pm}= \frac{x\pm 2^{2j} (\textrm{i} t)}{(2j+1)!},  \quad j\ge 1,
\]
the asterisk * represents complex conjugation, $\textbf{\emph{s}}=(0, s_2, 0, s_4, \cdots)$ are coefficients from the expansion
\[ \label{sexpand}
\sum_{j=1}^{\infty} s_{j}\lambda^{j}=\ln \left[\frac{2}{\lambda}  \tanh \left(\frac{\lambda}{2}\right)\right],
\]
$\textbf{\emph{a}}=(a_1, 0, a_3, 0, \dots, a_{2N-1})$ is the vector of internal parameters, and $a_{1}, a_{3}, a_{5}, \cdots, a_{2N-1}$ are free complex constants.
\end{lem}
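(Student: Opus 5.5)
The plan is to verify Lemma~\ref{Lemma2} within the bilinear KP-reduction framework. There are three steps: establish a Gram-type tau function for a KP hierarchy, impose the dimension reduction that produces $\sigma_n$, and then check the boundary condition.

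\textbf{Bilinear form and Gram tau function.} Under $u=(\sigma_1/\sigma_0)e^{\textrm{i}t}$ with $\sigma_0$ real and $\sigma_{-1}=\sigma_1^*$, the NLS equation (\ref{NLS-2020}) becomes the bilinear system
\[
\left(\tfrac12 D_x^2-1\right)\sigma_0\cdot\sigma_0=-\sigma_1\sigma_{-1},\qquad \left(\textrm{i}D_t+\tfrac12 D_x^2\right)\sigma_1\cdot\sigma_0=0 .
\]
This system is a reduction of the bilinear equations of the KP hierarchy in variables $x_1,x_2,x_{-1}$. Those equations are satisfied by the Gram determinants $\tau_n=\det(m_{ij}^{(n)})$ with
\[
m_{ij}^{(n)}=\mathcal{A}_i\mathcal{B}_j\,\frac{1}{p+q}\left(-\frac{p}{q}\right)^n e^{\xi+\eta},\qquad \xi=px_1+p^2x_2+p^{-1}x_{-1}+\xi_0(p),
\]
where $\eta$ is defined analogously in $q$, and $\mathcal{A}_i=\frac{1}{i!}[f_1(p)\partial_p]^i$, $\mathcal{B}_j=\frac{1}{j!}[f_2(q)\partial_q]^j$ are differential operators. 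The key fact is that differential operators in $p,q$ commute with the Gram structure, so the KP bilinear identities pass through $\mathcal{A}_i\mathcal{B}_j$ unchanged.

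\textbf{Dimension reduction.} Next I would reduce the variables $x_1,x_2,x_{-1}$ to $x,t$ and impose the complex-conjugacy condition. Take $p=q^*$ near the point where $p(\kappa)$ is parametrized by $p=\frac{1}{2}\left(e^{\kappa}+1\right)$, or an equivalent parametrization chosen so that the quantity $\partial_{x_1}+\partial_{x_{-1}}$ acting on $\tau_n$ becomes a multiple of $\tau_n$; this multiple is then a constant. The reduction requires two things: expanding in $\kappa$ to generate the Schur polynomials, and showing that only odd-indexed rows and columns are needed. For the second point, one chooses the expansion variable so that the linear combination generated by the reduction constraint $(\partial_{x_1}+\partial_{x_{-1}})m_{ij}=\sum_\nu(\ldots)m_{i-\nu,j-\nu}$ involves only shifts by even indices. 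That restricts the determinant to indices $2i-1,2j-1$, as in Eq.~(\ref{sigma_n}).

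\textbf{Schur-polynomial expansion and main obstacle.} The exponential factor, expanded in $\kappa$, yields $\exp(\sum x_k^\pm\kappa^k)$. Matching the coefficients of this expansion with the vectors $\textbf{x}^{\pm}(n)$ in Eq.~(\ref{defxk}) is a routine Taylor computation. The factor $1/(p+q)$, expanded through a generating function, produces the sum over $\nu$ with weight $4^{-\nu}$ and the shift $\nu\textbf{s}$, where $\textbf{s}$ comes from the expansion of $\ln\left[\frac{2}{\lambda}\tanh\frac{\lambda}{2}\right]$ in Eq.~(\ref{sexpand}). This is the step I expect to be the main obstacle: one must pick the right parametrization so that this generating function takes exactly the $\tanh$ form and contains only even powers, which is why $\textbf{s}$ has only even entries. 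The free constants $\xi_0(\kappa)$ supply the parameters $a_k$. Even-index $a$'s can be removed by the same row-operation argument used for Umemura polynomials earlier, which leaves $a_1,a_3,\dots,a_{2N-1}$.

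\textbf{Boundary condition.} Finally, I would verify that $\sigma_0$ is positive, since its Gram form with conjugate pairs is a positive-definite sum, so $u_N$ is nonsingular. I would also check that $\sigma_1/\sigma_0\to1$ as $x,t\to\infty$, because the leading polynomial degrees of $\sigma_1$ and $\sigma_0$ agree with equal leading coefficients. This gives the stated boundary condition $u\to e^{\textrm{i}t}$.
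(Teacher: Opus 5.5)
\textbf{Verdict: the route is right, but the key step has a genuine gap.} The paper does not prove this lemma; it imports it from \cite{YangNLS2021}, where it is derived by the same bilinear KP-reduction route you outline: Gram entries $\mathcal{A}_i\mathcal{B}_j m^{(n)}$, the constraint $(\partial_{x_1}+\partial_{x_{-1}})\tau_n=C\tau_n$, $q=p^*$, and elimination of even-indexed entries. The step you yourself flag as the main obstacle is resolved incorrectly.

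\textbf{The parametrization fails.} With $p=\tfrac12(e^{\kappa}+1)$ you get $p+1/p=2+\tfrac14\kappa^2+\tfrac18\kappa^3+\cdots$, which is not even in $\kappa$. So $(\partial_{x_1}+\partial_{x_{-1}})m_{ij}$ picks up odd index shifts; for example, row $3$ is sent to row $0$, and $m_{0j}\neq0$. The determinant on indices $2i-1,2j-1$ is then no longer an eigenfunction of $\partial_{x_1}+\partial_{x_{-1}}$, and the reduction breaks. Two further mismatches follow: $(p-1)/(p+1)=(e^{\kappa}-1)/(e^{\kappa}+3)\neq\tanh(\kappa/2)$, and $n\ln p$ would shift $x_1^{\pm}$ by $\pm n/2$ rather than $\pm n$.

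\textbf{The choice that works.} Take $p=e^{\kappa}$, $q=e^{\lambda}$, i.e.\ $p+1/p=2\cosh\kappa$, or equivalently $f_1(p)=p$ so that $p\partial_p=\partial_\kappa$. Let $\mathcal{A}_i=\partial_\kappa^i/i!$, evaluated at $\kappa=0$. Then:
\begin{itemize}
\item the constraint multiplier $2\cosh\kappa+2\cosh\lambda$ is even, so only even shifts occur;
\item $n\ln p=n\kappa$ gives the $\pm n$ in $x_1^{\pm}$;
\item $x_1=x$, $x_2=\textrm{i}t/2$ give $x_k^{+}=(x+2^{k-1}\textrm{i}t)/k!$;
\item the identity
\[
\frac{1}{p+q}=\frac{2}{(p+1)(q+1)}\sum_{\nu=0}^{\infty}\left[\tanh\frac{\kappa}{2}\tanh\frac{\lambda}{2}\right]^{\nu},\qquad \tanh\frac{\kappa}{2}=\frac{\kappa}{2}\exp\left(\sum_{j\ge1}s_j\kappa^j\right)
\]
produces the $\nu$-sum with weight $4^{-\nu}$, the index shift by $\nu$, and the shift $\nu\textbf{s}$ with $\textbf{s}$ as in Eq.~(\ref{sexpand}). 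This $\textbf{s}$ is even because $\frac{2}{\lambda}\tanh\frac{\lambda}{2}$ is even.
\end{itemize}
The raw expansion still has nonzero even entries $x^{\pm}_{2j}$, and the prefactor contributes $-\kappa/2-\ln\cosh(\kappa/2)$. All even-indexed terms, not only the even $a$'s, must be removed by the elimination argument of \cite{YangNLS2021}. The real constant $-1/2$ in $x_1^{\pm}$ is absorbed into $a_1$.

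\textbf{Sign error in the bilinear system.} Your first equation $(\tfrac12D_x^2-1)\sigma_0\cdot\sigma_0=-\sigma_1\sigma_{-1}$ carries the defocusing sign. With $u=(\sigma_1/\sigma_0)e^{\textrm{i}t}$ it gives $|u|^2=1-(\ln\sigma_0)_{xx}$ and does not imply Eq.~(\ref{NLS-2020}). Under the reduction, the constraint $\partial_{x_{-1}}\tau=C\tau-\partial_{x_1}\tau$ gives $D_{x_1}D_{x_{-1}}\tau\cdot\tau=-D_{x_1}^2\tau\cdot\tau$. So the KP equation $(\tfrac12D_{x_1}D_{x_{-1}}-1)\tau_n\cdot\tau_n=-\tau_{n+1}\tau_{n-1}$ becomes $(\tfrac12D_x^2+1)\sigma_0\cdot\sigma_0=\sigma_1\sigma_{-1}$. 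This is exactly the relation $(D_x^2+2)\sigma_0\cdot\sigma_0=2|\sigma_1|^2$ quoted in Sec.~\ref{sec:KP}.

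With these two corrections, the rest of your outline goes through. Positivity of $\sigma_0$ follows from the Cauchy--Binet sum of squares, whose term from columns $2,4,\dots,2N$ is the nonzero constant $2^{-N^2}$. The limit $\sigma_1/\sigma_0\to1$ also holds as you describe.
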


When $N=1$ and $a_1=0$, the above solution is $u_1(x,t)=\hat{u}_1(x, t)\hspace{0.04cm} e^{\textrm{i}t}$, where
\begin{equation} \label{Pere}
\hat{u}_1(x, t)=1- \frac{4(1+2\textrm{i}t)}{1+4x^2+4t^2}.
\end{equation}
This is the fundamental rogue wave in the NLS equation that was discovered by Peregrine in \cite{Peregrine1983} and is now called the Peregrine wave in the literature. This wave has a single hump of amplitude 3, flanked by two dips on the two sides of the $x$ direction.

For higher $N$ values, if all internal parameters $(a_1, a_3, \dots, a_{2N-1})$ are zero, then the resulting rogue wave $u_{N}(x, t)$ would have a central peak of amplitude $2N+1$ at the spatial-temporal origin $(x, t)=(0,0)$, flanked by gradually weakening lower peaks and valleys surrounding the central peak \cite{AAS2009, Clarkson2010, He2017, Miller2020, Yangbook2024}. This rogue wave is believed to have the highest possible peak amplitude among rogue waves of the $N$-th order and is thus called a super rogue wave in the literature. When some of the internal parameters $(a_1, a_3, \dots, a_{2N-1})$ are large in magnitude, the solution $u_N(x, t)$ would split into a number of Peregrine and lower-order rogue waves which form interesting patterns \cite{Akhmediev2013,He2013,YangNLS2021,Yang2024NLS,Ling2025,Yang2025NLS}.

Rogue patterns in the NLS equation that are associated with Umemura polynomials of the $\mbox{P}_{\mbox{\scriptsize III}}$ equation would arise when internal parameters $a_1, a_3, \cdots, a_{2N-1}$ in the $u_N(x,t)$ solution (\ref{uNform}) are large and of the following form
\[
a_{2j-1}=\frac{\mu}{2j-1} \hspace{0.04cm} A^{2j-1}, \quad 1\le j\le N,
\]
i.e.,
\[ \label{acond2}
a_1=\mu A, \quad a_3=\frac{\mu}{3}A^3, \quad a_5=\frac{\mu}{5}A^5, \quad  \dots, \quad a_{2N-1}=\frac{\mu}{2N-1}A^{2N-1},
\]
where $\mu$ is a free $O(1)$ complex constant, and $A$ is another free complex constant with large modulus, i.e., $|A|\gg 1$. Rogue patterns for this set of internal parameters can be predicted by root structures of the Umemura polynomial $U_N(z; \mu)$, which we will establish later in this section. Thus, we will call (\ref{acond2}) Umemura-type parameters in this article.  But before we establish this fact, we need to make an important clarification first.

\subsection{Differences in rogue patterns between Umemura-type parameters and Adler-Moser-type parameters} \label{secDiff}
In our recent work \cite{Yang2024NLS,Yang2025NLS}, we determined NLS rogue patterns under parameter conditions
\[ \label{acondAD}
a_1=0, \quad a_{2j+1}=\kappa_j \hspace{0.04cm} A^{2j+1},  \quad 1\le j\le N-1,
\]
where  $A$ is a free large positive constant, and $(\kappa_1, \kappa_2,  \dots, \kappa_{N-1})$ are arbitrary $O(1)$ complex constants not being all zero (here $A$ can also be a complex constant but is not necessary since $\kappa_j$ constants are allowed to be complex). In this parameter regime, we showed that the resulting rogue patterns are determined by roots of Adler-Moser polynomials $\Theta_{N}(z; \kappa_1, \kappa_2,  \dots, \kappa_{N-1})$. Thus, we will call these parameters (\ref{acondAD}) as Adler-Moser-type parameters.

If one chooses $\kappa_j=\mu/(2j+1)$ for $1\le j\le N-1$, they will see that the resulting Adler-Moser-type parameters (\ref{acondAD}) are the same as our current Umemura-type parameters (\ref{acond2}) except for the $a_1$ value. One could argue that the nonzero $a_1$ value in Umemura-type parameters  (\ref{acond2}) can be normalized to zero through a shift of the $(x, t)$ axes in the $u_N(x, t)$ solution (\ref{uNform}). Thus, they could be tempted to conclude that rogue patterns under Umemura-type parameters (\ref{acond2}) would just be special cases of the more general Adler-Moser-type parameters (\ref{acondAD}) that we have thoroughly studied in \cite{Yang2024NLS,Yang2025NLS}. To make this argument even stronger, they could also point to the simple root relation (\ref{UmemuraAD}) between Adler-Moser polynomials and Umemura polynomials, and argue that since Umemura polynomials can be viewed as special Adler-Moser polynomials, rogue patterns associated with Umemura polynomials then should quickly follow from those of Adler-Moser polynomials in \cite{Yang2024NLS,Yang2025NLS}. In particular, the rogue-wave field corresponding to a root $z_0$ of the Umemura polynomial $U_{N}(z;\mu)$ should just be that corresponding to the shifted root $z_0+\mu$ of the special Adler-Moser polynomial mentioned above. In short, one could be tempted to conclude that rogue patterns under the current Umemura-type parameters (\ref{acond2}) are just special cases of those under Adler-Moser-type parameters in \cite{Yang2024NLS,Yang2025NLS} and there is nothing new here.

The above argument on triviality of Umemura-type parameters (\ref{acond2}) is incorrect. We will first use an example to illustrate its erroneousness, followed by a detailed explanation on where that argument went wrong.

In this example, we take $N=5$ and two sets of parameters
\[   \label{para1a}
a_1=\mu A, \quad a_3=\frac{\mu}{3}A^3, \quad a_5=\frac{\mu}{5}A^5, \quad a_7=\frac{\mu}{7}A^7,  \quad a_9=\frac{\mu}{9}A^9,
\]
and
\[ \label{para1b}
a_1=0, \quad a_3=\frac{\mu}{3}A^3, \quad a_5=\frac{\mu}{5}A^5, \quad a_7=\frac{\mu}{7}A^7,  \quad a_9=\frac{\mu}{9}A^9,
\]
with $A=6$ and $\mu=2$. The first set of parameters (\ref{para1a}) are Umemura-type (\ref{acond2}), while the second set Adler-Moser-type (\ref{acondAD}). For these two sets of internal parameters, the NLS rogue waves $|u_{5}(x, t)|$ from Eq.~(\ref{uNform}) are plotted in the left and right panels of Fig.~\ref{figAD}, respectively. It is seen that in both panels, there are nine Peregrine waves located on two concentric arcs in the left side of the wave field. The difference between the two panels is in the center region of the two concentric arcs. In the left panel, that center region hosts a third-order super-rogue wave, while in the right panel, the center region hosts a triangular cluster of six Peregrine waves instead.

\begin{figure}[htb]
\begin{center}
\includegraphics[scale=0.35, bb=350 000 650 550]{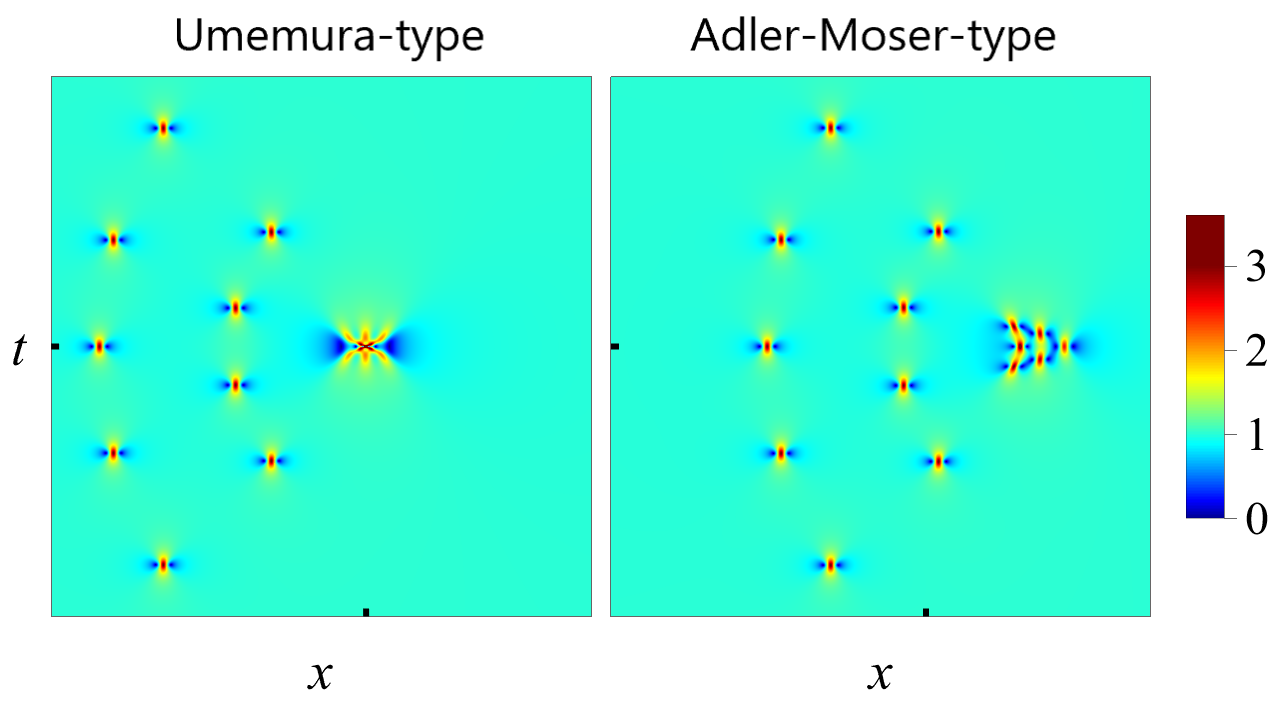}
\caption{NLS rogue waves $|u_{5}(x, t)|$ from Eq.~(\ref{uNform}) for Umemura-type parameters (\ref{para1a}) (left) and Adler-Moser-type parameters (\ref{para1b}) (right) with $A=6$ and $\mu=2$. In both panels, the $(x, t)$ intervals are $-35\leq x \leq 25,\ -30\leq t \leq 30$.  For the left panel,  the color bar does not show the full solution range. } \label{figAD}
\end{center}
\end{figure}

Fig.~\ref{figAD} clearly shows that rogue patterns for Umemura-type parameters (\ref{para1a}) are not the same as those for the corresponding Adler-Moser-type parameters (\ref{para1b}). Thus, the notion of NLS rogue patterns under Umemura-type parameters being simply special cases of rogue patterns of Adler-Moser-type parameters from Refs.~\cite{Yang2024NLS,Yang2025NLS} is incorrect.

Where does the argument for triviality of Umemura-type parameters (\ref{para1a}) go wrong? First, if one argues this triviality on the ground that the nonzero parameter $a_1$ in Eq.~(\ref{para1a}) can be normalized to zero through a shift of the $(x, t)$ axes, the error is that, while this $(x, t)$-shift can indeed make $a_1$ zero, but simultaneously it changes $a_3$, $a_5$, $a_7$ and $a_9$ as well. Specifically, one can see from Eq.~(\ref{defxk}) that this $(x, t)$-shift would change $(a_1, a_3, a_5, a_7, a_9)$ to
\begin{eqnarray*}
&& a_1=0, \quad a_3=\frac{\mu}{3}A^3-\frac{\mu}{3!}\left[\Re(A)+\textrm{i}2^2\Im(A)\right],
\quad a_5=\frac{\mu}{5}A^5-\frac{\mu}{5!}\left[\Re(A)+\textrm{i}2^4\Im(A)\right],      \\
&& a_7=\frac{\mu}{7}A^7-\frac{\mu}{7!}\left[\Re(A)+\textrm{i}2^6\Im(A)\right], \quad
a_9=\frac{\mu}{9}A^9-\frac{\mu}{9!}\left[\Re(A)+\textrm{i}2^8\Im(A)\right],
\end{eqnarray*}
where $\Re$ and $\Im$ represent the real and imaginary parts of a complex number, respectively. These changed parameters are clearly not Adler-Moser type (\ref{acondAD}) or (\ref{para1b}). Thus, one cannot infer rogue patterns under Umemura-type parameters (\ref{para1a}) from those of Adler-Moser parameters (\ref{acondAD}) or (\ref{para1b}) in \cite{Yang2024NLS,Yang2025NLS}. Second, if one argues this triviality on the ground that Umemura polynomials are related to special Adler-Moser polynomials through a simple $z$-shift (\ref{UmemuraAD}), the error is that, we know from Ref.~\cite{Yang2025NLS} that rogue predictions for zero and nonzero multiple roots of the associated special polynomials can be very different. A zero multiple root of the Umemura polynomial under parameters (\ref{para1a}) would correspond to a nonzero multiple root of the corresponding Adler-Moser polynomial. Then, if one uses the rogue field for the nonzero multiple root of the Adler-Moser polynomial to predict the rogue field for the zero multiple root of the Umemura polynomial, it would be wrong, as Fig.~\ref{figAD} clearly shows.

The conclusion from Fig.~\ref{figAD} and the above explanations is that, Umemura-type parameters (\ref{acond2}) is a new parameter regime which features its own rogue patterns that cannot be entirely inferred from rogue patterns of Adler-Moser-type parameters in earlier works \cite{Yang2024NLS,Yang2025NLS}.

We need to point out that, the attempt of using rogue fields of Adler-Moser-type parameters (\ref{acondAD}) to infer rogue fields of Umemura-type parameters (\ref{acond2}) is incorrect but not entirely wrong. As one can see from Fig.~\ref{figAD}, the error of this attempt is only on the portion of the rogue fields that are induced by multiple roots of the underlying Umemura and Adler-Moser polynomials, which lie at the center region of the two arcs in Fig.~\ref{figAD}. The rest of the rogue fields comprising nine lumps (Peregrine waves) on two concentric arcs at the left sides of the panels are induced by simple roots of those polynomials. For those wave fields, one can indeed infer one from another, as can be seen from Fig.~\ref{figAD}.

\subsection{Rogue patterns under Umemura-type parameters} \label{secTheorem1}
Since rogue patterns for Umemura-type parameters (\ref{acond2}) cannot be inferred entirely from those for Adler-Moser-type parameters (\ref{acondAD}) in \cite{Yang2024NLS,Yang2025NLS}, we need to derive rogue patterns for Umemura-type parameters (\ref{acond2}) then. Our analytical predictions for these rogue patterns are summarized in the following theorem.

\begin{thm} \label{Theorem1}
For the NLS rogue wave $u_N(x, t)$ in Eq.~(\ref{uNform}) under Umemura-type parameters (\ref{acond2}), asymptotic predictions for the rogue field at large $|A|$ are as follows.
\begin{enumerate}
\item
If $\mu$ is equal to one of $0, \pm 1, \dots, \pm (N-1)$, then the rogue wave $u_N(x, t)$ asymptotically splits into $N_p$ fundamental (Peregrine) rogue waves located far away from the spatial-temporal origin, plus a $N_0$-th order super rogue wave in the $O(1)$ neighborhood of the origin, where $N_p$ and $N_0$ are as given in Lemma~\ref{Lemma1}. These Peregrine waves are $\hat{u}_1(x-\hat{x}_{0}, t-\hat{t}_{0}) \hspace{0.05cm} e^{\textrm{i}t}$, where $\hat{u}_1(x, t)$ is as given in Eq.~(\ref{Pere}), and their spatial-temporal locations $(\hat{x}_{0}, \hat{t}_{0})$ are given by
\begin{eqnarray}
&&\hat{x}_{0}+\textrm{i}\hspace{0.05cm}\hat{t}_{0}=z_{0} A, \label{x0t0}
\end{eqnarray}
with $z_{0}$ being each of the $N_p$ simple nonzero roots of the Umemura polynomial $U_{N}(z;\mu)$. The  $N_0$-th order super rogue wave $u_{N_0}(x, t)$ is given by Eq.~(\ref{uNform}) with all its internal parameters $(\hat{a}_1, \hat{a}_3, \cdots, \hat{a}_{2N_0-1})$ being zero. The errors of these Peregrine and $N_0$-th order super rogue wave approximations are $O(|A|^{-1})$. If $(x,t)$ is not in the $O(1)$ neighborhood of these Peregrine and $N_0$-th order super rogue waves, then $u_N(x,t)$ asymptotically approaches the constant background $e^{\textrm{i}t}$ as $|A| \to \infty$.
\item
If $\mu\ne 0, \pm 1, \dots, \pm (N-1)$, then the rogue wave $u_N(x, t)$ asymptotically splits into $N(N+1)/2$ Peregrine waves $\hat{u}_1(x-\hat{x}_{0}, t-\hat{t}_{0}) \hspace{0.05cm} e^{\textrm{i}t}$, whose spatial-temporal locations $(\hat{x}_{0}, \hat{t}_{0})$ are given by Eq.~(\ref{x0t0}), with $z_{0}$ being each of the $N(N+1)/2$ simple nonzero roots of the Umemura polynomial $U_{N}(z;\mu)$. The error of this Peregrine wave approximation is $O(|A|^{-1})$.
\end{enumerate}
\end{thm}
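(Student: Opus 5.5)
The argument follows the asymptotic framework of \cite{YangNLS2021,Yang2025NLS}. First I rewrite $\sigma_n$ from Lemma~\ref{Lemma2} as a $3N\times 3N$ determinant using the Cauchy--Binet formula:
\[
\sigma_n=\sum_{0\le\nu_1<\cdots<\nu_N\le 2N-1}\det_{i,j}\left[\frac{1}{2^{\nu_j}}S_{2i-1-\nu_j}(\textbf{\emph{x}}^{+}+\nu_j\textbf{\emph{s}}+\textbf{\emph{a}})\right]\det_{i,j}\left[\frac{1}{2^{\nu_j}}S_{2i-1-\nu_j}(\textbf{\emph{x}}^{-}+\nu_j\textbf{\emph{s}}+\textbf{\emph{a}}^*)\right].
\]
With $a_{2j-1}=\mu A^{2j-1}/(2j-1)$, the generating function gives
\[
S_k(\textbf{\emph{x}}^{+}+\nu\textbf{\emph{s}}+\textbf{\emph{a}})=A^k\,p_k(z;\mu)\,[1+O(A^{-2})],\qquad z=A^{-1}(x+\textrm{i}t),
\]
when $x+\textrm{i}t=O(A)$. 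This holds because the dominant exponent is $(x+\textrm{i}t)\epsilon+\sum_j \mu (A\epsilon)^{2j-1}/(2j-1)$. The higher components of $\textbf{\emph{x}}^{\pm}$, $\nu\textbf{\emph{s}}$ and $n$ contribute only lower powers of $A$.

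The $\nu=(0,1,\dots,N-1)$ term dominates, so $\sigma_n\sim |c_N^{-1}|^2|A|^{N(N+1)}|U_N(z;\mu)|^2$ using (\ref{DetUmemura}). Away from roots of $U_N$, this leading term is the same for $n=0$ and $n=1$, so $u_N\to e^{\textrm{i}t}$.

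Next take a nonzero simple root $z_0$ (by Lemma~\ref{Lemma1} all nonzero roots are simple). Near $x+\textrm{i}t=z_0A$, the leading term $U_N$ vanishes linearly. I keep the next-order Cauchy--Binet terms, namely $\nu=(0,\dots,N-2,N)$, together with the first correction in $S_k$ coming from $x_1^\pm$ and from $\nu\textbf{\emph{s}}$. Following the computation in \cite{YangNLS2021} for Yablonskii--Vorob'ev roots, this reduces $\sigma_n$ to
\[
|A|^{\gamma}|U_N'(z_0)|^2\left[(x-\hat x_0)^2+(t-\hat t_0)^2+\tfrac14\right]\ \text{(with $n$-dependent shifts)}[1+O(A^{-1})].
\]
That is the Peregrine denominator $\sigma_n$ for $N=1$, so $u_N\approx \hat u_1(x-\hat x_0,t-\hat t_0)e^{\textrm{i}t}$. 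The nonvanishing of $U_N'(z_0)$ from simplicity is what makes this step go through. This already proves part 2 completely, and the simple-root portion of part 1.

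The main obstacle is the multiple zero root in part 1, where $(x,t)=O(1)$. Here I would not expand in $z$. Instead I expand each $S_k$ directly in powers of $A$, treating $x,t$ as $O(1)$:
\[
S_k(\textbf{\emph{x}}^{+}+\nu\textbf{\emph{s}}+\textbf{\emph{a}})=\sum_{j}A^{j}\,p_j(0;\mu)\,S_{k-j}(\textbf{\emph{x}}^{+}+\nu\textbf{\emph{s}}).
\]
This is exact, since $\textbf{\emph{a}}$ generates $p_j(0;\mu)A^j$ in the exponent. The determinant then becomes a sum over column combinations whose $A$-weights are governed by the Wronskian structure of $U_N(z;\mu)=\Theta_N(z+\mu)$ at $z=0$. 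The task is to show that the highest surviving power of $A$ multiplies exactly the $N_0\times N_0$ determinant $\det(\phi^{(n)}_{2i-1,2j-1})$ with all internal parameters equal to zero, i.e.\ $\sigma_n$ of the super rogue wave $u_{N_0}$.

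Because $\mu$ is an integer here, $\exp(\mu\sum\epsilon^{2j-1}/(2j-1))$ equals $\left(\frac{1+\epsilon}{1-\epsilon}\right)^{\mu/2}$ up to even terms. I would therefore first replace it by $(1+\epsilon)^{\mu}$, which does not change the determinant because even-index parameters drop out, as argued below (\ref{polypkzmu2}). This makes $p_j(0;\mu)$ binomial coefficients that vanish for $j>|\mu|$ when $\mu>0$; the case $\mu<0$ follows by the symmetry (\ref{SymmetrySn}). Row operations with polynomial (Toeplitz) coefficients can then eliminate the $A$-dependence in all but $N_0$ rows. The remaining Schur polynomials carry no $A$-dependent internal parameters, which yields the reduction to $u_{N_0}$ with zero parameters up to $O(A^{-1})$. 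The multiplicity $N_0(N_0+1)/2$ from Lemma~\ref{Lemma1} serves as the consistency check on the leading power of $A$.
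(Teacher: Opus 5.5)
Your route is the paper's: you split $S_k(\mathbf{x}^{\pm}+\nu\mathbf{s}+\mathbf{a})=\sum_j A^j p_j(0;\mu)S_{k-j}(\mathbf{x}^{\pm}+\nu\mathbf{s})$, trade $p_j(0;\mu)$ for $\binom{\mu}{j}$ by discarding even-power terms (the paper's row reduction $\mathbf{H}\to\widehat{\mathbf{H}}$), and reduce by row operations. Your simple-root part is at the level of detail the paper itself uses, since the paper omits it. The zero-root argument, however, has two gaps.

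\textbf{The missing nondegeneracy condition.} With $\hat h_k=\binom{\mu}{k}$, row $\mu+k$ ($1\le k\le N_0$) of $[\hat h_{2i-j}]$ vanishes before column $\mu+2k$ and has entry $\hat h_\mu=1$ there. So, at leading order, the first $\mu$ rows must absorb the columns $\nu=0,\dots,\mu-1$. This works only if the $\mu\times\mu$ block $[\binom{\mu}{2i-j}]_{1\le i,j\le\mu}$ is nonsingular; by (\ref{DetUmemura0}) at $z=0$ its determinant is $U_\mu(0;\mu)/c_\mu$. If that block were singular, your ``eliminate the $A$-dependence in all but $N_0$ rows'' step would fail: the pivots would move right, and both the leading power of $A$ and the surviving block would change. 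The paper secures this from Lemma~\ref{Lemma1}: $U_k(0;\mu)\neq0$ for $k\le\mu$, so the leading principal minors are nonzero and Gauss elimination needs no pivoting. Lemma~\ref{Lemma1} is therefore an essential input, not a consistency check. The zero-root multiplicity is equivalent to this echelon structure, so you could use it instead, but you must actually use it.

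\textbf{The misidentified surviving block.} Because the leading entry of row $\mu+k$ sits in column $\mu+2k$, the reduced factor is $\widehat\Phi_{i,j}=2^{-(j-1)}S_{2i-j}[\mathbf{x}^{+}+(j-1+\mu)\mathbf{s}]$, and similarly for $\widehat\Psi$, as in (\ref{PhiPsihat}). Its $\nu\mathbf{s}$ shift is offset by $\mu$, so it is not literally the zero-parameter $\sigma_n$ of order $N_0$, contrary to your claim that ``the remaining Schur polynomials carry no internal parameters.''

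To finish, you need the NLS-specific fact that $\mathbf{s}$ has only even entries. Put $\mathbf{y}=\mathbf{x}^{+}+(j-1)\mathbf{s}$. Then $S_{2i-j}(\mathbf{y}+\mu\mathbf{s})=\sum_m S_{2m}(\mu\mathbf{s})\,S_{2(i-m)-j}(\mathbf{y})$, with coefficients independent of $j$. This is a unit lower-triangular row operation, so the $\mu\mathbf{s}$ drops out of the determinant.

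Nothing in your sketch uses this fact, so the same argument would equally ``prove'' a zero-parameter rogue wave for the Boussinesq equation. That is false: by Theorem~\ref{Theorem2}, the same shift produces the nonzero parameters $|\mu|s_{2k-1}$.

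Minor points:
\begin{itemize}
\item The reduction of $\mu<0$ to $\mu>0$ comes from $(\mu,A)\mapsto(-\mu,-A)$ leaving every $a_{2j-1}$ unchanged, not from (\ref{SymmetrySn}) itself.
\item The $n$ in $x_1^{\pm}$ makes the relative error of your scaled $S_k$ of order $O(A^{-1})$, not $O(A^{-2})$.
\end{itemize}
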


This theorem gives very clear predictions on the rogue field under Umemura-type parameters (\ref{acond2}) for all $\mu$ values. The reason for these clear rogue-field predictions is that we have very clear root-multiplicity information of Umemura polynomials for all $\mu$ values, which was summarized in Lemma~\ref{Lemma1}.

The reader may notice the similarities between this theorem's results and those for Adler-Moser-type parameters in \cite{YangNLS2021,Yang2024NLS,Yang2025NLS}. In both cases, for a simple nonzero root of the associated polynomial, the  corresponding rogue field is a fundamental rogue wave far away from the origin; and for a zero (possibly repeated) root of the associated polynomial, the corresponding rogue field is a lower-order rogue wave in the $O(1)$ neighborhood of the origin. These similarities are not surprising. Indeed, these features also appear in rogue patterns associated with other polynomials such as Okamoto-hierarchy polynomials (see \cite{YangOkamoto2023}), and we have reason to believe that they are universal features of rogue wave patterns. For rogue waves associated with a given polynomial, the main work is to derive the asymptotic locations of fundamental rogue waves induced by a simple nonzero root, and the order and internal parameters of the lower-order rogue wave induced by a zero (possibly repeated) root. These tasks are sometimes easier but some other times more challenging. For example, both of these tasks are highly nontrivial for rogue patterns associated with Okamoto-hierarchy polynomials \cite{YangOkamoto2023}.

For the present rogue patterns associated with Umemura polynomials in Theorem~\ref{Theorem1}, we find that the results on fundamental (Peregrine) rogue waves induced by simple nonzero roots can be readily proved by slightly modifying the counterpart proof for Adler-Moser polynomials in \cite{Yang2024NLS}. So, that part of the proof will be omitted for brevity. We will only prove the results on the lower-order rogue wave induced by a zero root below.

\textbf{Proof.} When $\mu$ is equal to one of $0, \pm 1, \dots, \pm (N-1)$, we know from Lemma~\ref{Lemma1} that the Umemura polynomial $U_{N}(z;\mu)$ has a zero root of multiplicity $N_0(N_0+1)/2$, where $N_0=N-|\mu|$. In this case, we will show below that corresponding to this zero root, there exists a $N_0$-th order rogue wave $u_{N_0}(x, t)$ with all-zero internal parameters (thus a super rogue wave) in the $O(1)$ neighborhood of the origin, and the error of this prediction is $O(|A|^{-1})$. We will only prove this for positive values of $\mu$, i.e., $1\le \mu \le N-1$. The $\mu=0$ case is trivial, while the negative-$\mu$ case can be converted to the positive-$\mu$ case by flipping the sign of parameter $A$ in view of Eq.~(\ref{acond2}).

In the $O(1)$ neighborhood of the origin, where $x^2+t^2=O(1)$, we first rewrite the $\sigma_n$ determinant (\ref{sigma_n}) into a $3N\times 3N$ determinant \cite{OhtaJY2012}
\[ \label{sigma3Nby3N}
\sigma_{n}=\left|\begin{array}{cc}
\textbf{O}_{N\times N} & \mathbf{\Phi}_{N\times 2N} \\
-\mathbf{\Psi}_{2N\times N} & \textbf{I}_{2N\times 2N} \end{array}\right|,
\]
where
\[ \label{PhiPsi}
\Phi_{i,j}=2^{-(j-1)} S_{2i-j}\left[\textbf{\emph{x}}^{+} + (j-1) \textbf{\emph{s}}+\textbf{\emph{a}}\right], \quad
\Psi_{i,j}=2^{-(i-1)} S_{2j-i}\left[\textbf{\emph{x}}^{-} + (i-1) \textbf{\emph{s}}+\textbf{\emph{a}}^*\right].
\]
From the definition of Schur polynomials we see that
\[ \label{Sjsplit}
S_{k}(\textbf{\emph{x}}^{+} +\nu \textbf{\emph{s}}+\textbf{\emph{a}}) = \sum_{i=0}^k S_{k-i}(\textbf{\emph{a}})S_{i}(\textbf{\emph{x}}^{+} +\nu \textbf{\emph{s}}),
\]
where $\nu$ is any integer. In addition, from the definition of functions $p_{j}(z; \mu)$ in Eq.~(\ref{polypkzmu2}), we see that
\[ \label{Sjw}
S_k(\textbf{\emph{a}})=A^k p_k(0; \mu).
\]
The $p_k(0; \mu)$ value can be obtained by taking $z=0$ in Eq.~(\ref{polypkzmu2}), i.e., from the expansion
\begin{equation}\label{polypkzmu2b}
\sum_{k=0}^{\infty} p_{k}(0;\mu) \epsilon^k = \exp\left( \mu \sum_{j=1}^{\infty} \frac{1}{2j-1} \epsilon^{2j-1} \right).
\end{equation}
Then, using the above two relations (\ref{Sjsplit})-(\ref{Sjw}), we can rewrite the $\mathbf{\Phi}$ matrix in Eq.~(\ref{sigma3Nby3N}) as $\mathbf{\Phi}=\mathbf{F}\mathbf{G}$, where
\begin{eqnarray}
&& \mathbf{F}=\mbox{Mat}_{1\le i\le N,\hspace{0.06cm} 1\le j\le 2N}\left(A^{2i-j}h_{2i-j}\right)= \mathbf{D}_1\hspace{0.02cm} \mathbf{H} \hspace{0.04cm} \mathbf{D}_2,  \\
&& \mathbf{G}=\mbox{Mat}_{1\le i, j\le 2N}\left(
2^{-(j-1)} S_{i-j}\left[\textbf{\emph{x}}^{+} + (j-1) \textbf{\emph{s}}\right]\right), \\
&& \mathbf{D}_1=\mbox{diag}(A, A^3, \dots, A^{2N-1}),   \label{defD1} \\
&& \mathbf{D}_2=\mbox{diag}(1, A^{-1}, \dots, A^{1-2N}),  \label{defD2}
\end{eqnarray}
the $\mathbf{H}$ matrix is
\[ \label{defHN}
\mathbf{H}=\left(\begin{array}{ccccccccc} h_1 & 1 & & &&&    \\
                               h_3 & h_2 & h_1 & 1 &&&   \\
                               h_5 & h_4 & h_3 & h_2 & h_1 & 1 & \\
                               \vdots & \vdots & \vdots & \vdots & \vdots & \vdots & \\
                               h_{2N-1} & h_{2N-2} & h_{2N-3} &h_{2N-4} & h_{2N-5} & h_{2N-6} & \cdots & h_1 & 1
                               \end{array} \right),
\]
and $h_k \equiv p_k(0; \mu)$.

To see the structure of the above $\mathbf{H}$ matrix, it is helpful to introduce parameters $\hat{h}_k\equiv \hat{p}_k(0; \mu)$, where $\hat{p}_k(0; \mu)$ is obtained by taking $z=0$ in Eq.~(\ref{polypkzmu}), i.e., from the expansion
\begin{equation}\label{polypkzmub}
\sum_{k=0}^{\infty} \hat{p}_{k}(0;\mu) \epsilon^k =(1+ \epsilon)^\mu.
\end{equation}
Recalling that $\mu$ is a positive integer, we see that $\hat{h}_k={\displaystyle {\tbinom {\mu}{k}}}$, or more explicitly,
\[ \label{pk0mu}
\hat{h}_k=\left\{ \begin{array}{ll}
\mu (\mu-1)\cdots (\mu-k+1)/k!, & 1\le k \le \mu-1, \\
1, & k=0 \ \mbox{or} \ k=\mu,  \\
0, & k<0 \ \mbox{or} \  k>\mu.
\end{array} \right.
\]
Parameters $h_k$ in the $\mathbf{H}$ matrix (\ref{defHN}) are directly related to parameters $\hat{h}_k$. Indeed, from Eqs.~(\ref{polypkzmu}), (\ref{polypkzmu1}), (\ref{polypkzmu2}) and the above parameter definitions, we see that
\begin{equation}\label{polypkzmu2c}
\sum_{k=0}^{\infty} h_k \epsilon^k = \left(\sum_{k=0}^{\infty} \hat{h}_k \epsilon^k  \right) \exp\left( \sum_{j=1}^\infty \frac{\mu}{2j}\epsilon^{2j}\right).
\end{equation}
Thus,
\[ \label{hkhkhat}
h_k=\sum_{j=0}^{[k/2]} \gamma_j\hat{h}_{k-2j},
\]
where $\gamma_j$ are coefficients from the expansion
\[
\sum_{k=0}^\infty \gamma_k \epsilon^{2k} = \exp\left( \sum_{j=1}^\infty \frac{\mu}{2j}\epsilon^{2j}\right),
\]
and $[k/2]$ represents the largest integer less or equal to $k/2$.

Utilizing the relation (\ref{hkhkhat}) and performing elementary row operations of multiplying a row by a number and then adding it to a lower row in the $\mathbf{H}$ matrix (\ref{defHN}), we can convert this matrix to the following $\widehat{\mathbf{H}}$ matrix,
\[ \label{defHhatN}
\widehat{\mathbf{H}}=\left(\begin{array}{ccccccccc} \hat{h}_1 & 1 & & &&&    \\
                               \hat{h}_3 & \hat{h}_2 & \hat{h}_1 & 1 &&&   \\
                               \hat{h}_5 & \hat{h}_4 & \hat{h}_3 & \hat{h}_2 & \hat{h}_1 & 1 & \\
                               \vdots & \vdots & \vdots & \vdots & \vdots & \vdots & \\
                               \hat{h}_{2N-1} & \hat{h}_{2N-2} & \hat{h}_{2N-3} &\hat{h}_{2N-4} & \hat{h}_{2N-5} & \hat{h}_{2N-6} & \cdots & \hat{h}_1 & 1
                               \end{array} \right).
\]
Then, recalling the values of $\hat{h}_k$ in Eq.~(\ref{pk0mu}), we see that this  $\widehat{\mathbf{H}}$ matrix has the following structure,
\[\label{Hmatrix}
\widehat{\mathbf{H}}=\left(\begin{array}{ccccccccccc} \hat{h}_1 & 1 & \cdots & 0 &  0 & 0 &  \cdots & 0 & \cdots &  0   \\
                               \hat{h}_3 & \hat{h}_2 & \ldots & 0 &  0 & 0 &  \cdots & 0 &  \cdots &  0    \\
                               \vdots  & \vdots  & \ddots & \vdots &  \vdots  & \vdots  & \vdots & \vdots & \ddots &  \vdots  \\
                               0 & 0 & \cdots & \hat{h}_\mu &  \hat{h}_{\mu-1} & \hat{h}_{\mu-2}&  \cdots &  0 & \cdots &  0  \\
                                0 & 0 & \cdots & 0 &  0 & \hat{h}_\mu & \hat{h}_{\mu-1}&  \hat{h}_{\mu-2}& \cdots &  0  \\
                                 0 & 0 & \cdots & 0 &  0 & 0 & 0&  \hat{h}_\mu & \cdots &  0  \\
                               \vdots  & \vdots  & \ddots & \vdots &  \vdots  & \vdots  & \vdots & \vdots  & \ddots & \vdots \\
                                 0 & 0 & \cdots & 0 &  0 & 0 & 0& 0& \cdots &  1  \\
                               \end{array} \right).
\]
The determinant of every $k\times k$ submatrix (with $1\le k \le \mu$) at the top left corner of this $\widehat{\mathbf{H}}$ matrix can be seen from Eq.~(\ref{DetUmemura0}) as $U_{k}(0; \mu)/c_{k}$, which is nonzero since zero is not a root of the Umemura polynomial $U_{k}(z; \mu)$ when $\mu \ge k$ in view of Lemma~\ref{Lemma1}. This $k\times k$ determinant can also be calculated explicitly by applying the Pascal's Identity and row/column manipulations, from which one can see directly that it is nonzero. Thus, the $\mu\times \mu$ submatrix at the top left corner of this $\widehat{\mathbf{H}}$ matrix has nonzero leading principal minors, and as a consequence it can be reduced to an upper triangular matrix with nonzero diagonal elements by Gauss elimination without pivoting. Then, recalling $\hat{h}_{\mu}=1$, the row echelon form of the $\mathbf{H}$ matrix (\ref{defHN}) is then
\[ \label{defHtildeN}
\widetilde{\mathbf{H}}=\left(\begin{array}{cc} \mathbf{A} & \mathbf{C}    \\  \mathbf{0} & \mathbf{B}
\end{array} \right),
\]
where $\mathbf{A}$ is a $\mu \times \mu$ upper triangular matrix with nonzero diagonal elements, and $\mathbf{B}$ is a $N_0\times (N_0+N)$ staired matrix
\[ \label{BformLemma1}
\mathbf{B}=\left(\begin{array}{cccccccccccc} 0 & 1 &\hat{h}_{\mu-1} & \hat{h}_{\mu-2} & \dots & \dots & \dots & \dots & \dots & \dots & \dots & 0 \\
&& 0 & 1 & \hat{h}_{\mu-1} & \hat{h}_{\mu-2} & \dots & \dots & \dots & \dots & \dots & 0\\
&&&& 0 & 1 & \hat{h}_{\mu-1} & \hat{h}_{\mu-2} & \dots & \dots  & \dots & 0\\
&&&&&&\ddots & \ddots &\ddots & \ddots & \ddots & \vdots \\
&&&&&&& & 0 & 1 & \dots & 1  \end{array} \right),
\]
where $N_0=N-\mu$. Once this row echelon form $\widetilde{\mathbf{H}}$ of $\mathbf{H}$ is known, then following the steps in Ref.~\cite{Yang2025NLS}, we will see that $\sigma_n$ in Eq.~(\ref{sigma3Nby3N}) asymptotically reduces to
\[ \label{sigmasigmahat}
\sigma_n= \alpha_0 \hspace{0.04cm} \hat{\sigma}_n  \left[1+O\left(|A|^{-1}\right)\right],       \quad |A|\gg 1,
\]
where $\alpha_0$ is a certain positive constant,
\[ \label{sigmahat}
\hat{\sigma}_n=\left|\begin{array}{cc}
\textbf{O}_{N_0\times N_0} & \widehat{\mathbf{\Phi}}_{N_0\times 2N_0} \\
-\widehat{\mathbf{\Psi}}_{2N_0\times N_0} & \textbf{I}_{2N_0\times 2N_0} \end{array}\right|,
\]
and
\[ \label{PhiPsihat}
\widehat{\Phi}_{i,j}=2^{-(j-1)} S_{2i-j}\left[\textbf{\emph{x}}^{+} + (j-1+\mu) \textbf{\emph{s}}\right], \quad
\widehat{\Psi}_{i,j}=2^{-(i-1)} S_{2j-i}\left[\textbf{\emph{x}}^{-} + (i-1+\mu) \textbf{\emph{s}}\right].
\]
For the NLS equation, the odd elements $s_{2k-1}$ of the constant vector $\textbf{\emph{s}}$ are all zero, see Eq.~(\ref{sexpand}). Because of that and the index structure of the $\widehat{\mathbf{\Phi}}$ and $\widehat{\mathbf{\Psi}}$ matrices above, we can use techniques of Ref.~\cite{YangNLS2021} to remove the $\mu$ terms in the above equation (\ref{PhiPsihat}) without affecting the $\hat{\sigma}_n$ determinant. Then, the resulting $\hat{\sigma}_n$ simply corresponds to the $N_0$-th order NLS rogue wave $u_{N_0}(x, t)$ with all-zero internal parameters (thus a super rogue wave), and the error of this $u_{N_0}(x, t)$ approximation can be seen from Eqs.~(\ref{uNform}) and (\ref{sigmasigmahat}) as $O(|A|^{-1})$. This completes the proof. $\Box$

In the above proof, the key is to derive the row echelon form $\widetilde{\mathbf{H}}$ of $\mathbf{H}$ in Eq.~(\ref{defHtildeN}). As we have explained in Ref.~\cite{Yang2025NLS}, this row echelon form is equivalent to the multiplicity of the zero root in the Umemura polynomial $U_N(z; \mu)$, and one can infer one from the other. By explicitly deriving this row echelon form above, we basically provided an alternative proof for the multiplicity of the zero root in the $U_N(z; \mu)$ polynomial given in Lemma~\ref{Lemma1}.

\subsection{Numerical confirmation of Umemura-type rogue patterns}
Now, we use two examples to confirm Theorem~\ref{Theorem1}.

In our first example, we take
\[ \label{para3}
N=5, \quad A=6, \quad \mu=2
\]
in the rogue wave solution (\ref{uNform}) with Umemura-type parameters (\ref{acond2}). The exact solution $|u_5(x, t)|$ is plotted in the upper-left panel of Fig.~\ref{figNLS}. This solution contains nine widely-separated lumps on two concentric arcs, with five and four lumps on the outer and inner arc, respectively. Each lump has a peak amplitude of 3 approximately, and it is flanked by two dips on the two sides of the $x$-direction. A quick examination shows that each lump is an approximate Peregrine wave. At the center of the two arcs lies a more delicate wave structure with a central hump of amplitude 7 approximately, flanked by lower peaks and valleys nearby. Examination of this wave structure shows that it is an approximate third-order super rogue wave
\cite{AAS2009, Clarkson2010, He2017, Miller2020, Yangbook2024}.

Next, we use Theorem~\ref{Theorem1} to predict the wave field of this solution. The root structure of the underlying Umemura polynomial $U_5(z; 2)$ in Fig.~\ref{fig1}(b) comprises 9 simple nonzero roots lying on two concentric arcs and a zero root of multiplicity 6. Then, according to Theorem~\ref{Theorem1}, the solution $u_5(x, t)$ would asymptotically split into 9 Peregrine waves whose locations are given by Eq.~(\ref{x0t0}), plus a third-order super rogue wave. This predicted solution is plotted in the upper right panel of Fig.~\ref{figNLS}. By comparing this predicted solution with the true solution in the upper left panel, we can clearly see that the predicted solution matches the true solution very well.

In our second example, we take
\[ \label{para4}
N=5, \quad A=15, \quad \mu=1/100
\]
in the rogue wave solution (\ref{uNform}) with Umemura-type parameters (\ref{acond2}). The exact solution $|u_5(x, t)|$ is plotted in the lower left panel of Fig.~\ref{figNLS}. We can see that this solution comprises 9 approximate Peregrine waves on an outer ring, another 5 approximate Peregrine waves on an inner ring, plus one more approximate Peregrine wave at the center of the two rings. A wave pattern similar to this has been numerically reported in \cite{He2013} (see Fig.~6 there).

Analytically, we can predict this wave pattern using Theorem~\ref{Theorem1}. The root structure of the underlying Umemura polynomial $U_5(z; \mu)$ with $\mu=1/100$ in Fig.~\ref{fig1}(d) comprises 9 simple roots on an outer ring, another 5 simple roots on an inner ring, plus one more simple root at the center of the two rings (this center root is approximately $-0.002844$, which is very small but nonzero). Then, Theorem~\ref{Theorem1} predicts that the wave field of the solution $u_5(x, t)$ would split into 15 Peregrine waves whose locations are given by Eq.~(\ref{x0t0}). This predicted solution is plotted in the lower right panel of Fig.~\ref{figNLS}. Comparison of this predicted solution to the true solution shows that they agree with each other very well.

Quantitatively, we have also measured the errors of our asymptotic predictions versus the large parameter $|A|$ for both examples in Fig.~\ref{figNLS}. That error analysis confirmed that the errors do decay in proportion to $|A|^{-1}$ as Theorem~\ref{Theorem1} predicted. Details of this quantitative comparison are omitted here for brevity.

\begin{figure}[htb]
\begin{center}
\includegraphics[scale=0.35, bb=350 000 650 1050]{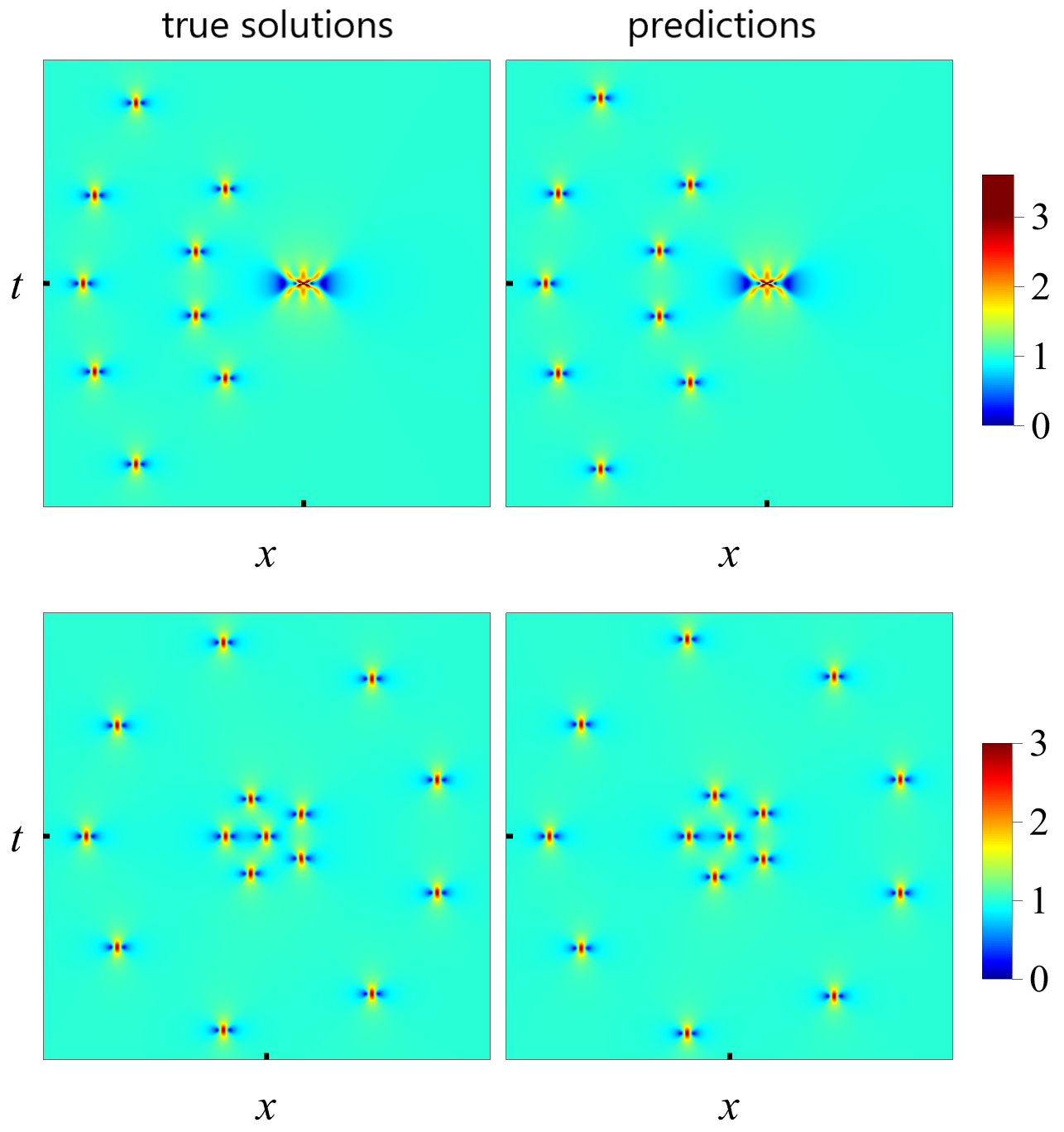}
\caption{ Comparisons between true NLS rogue solutions $|u_5(x,t)|$ (left panels) and their analytical predictions (right panels) under Umemura-type internal parameters (\ref{acond2}). Upper row: $A=6,\ \mu=2$; lower row: $A=15,\ \mu=1/100$. In the upper row,  $ -35 \leq x \leq 25,\  -30 \leq t  \leq 30$; in the lower row, $ -30 \leq x, t  \leq 30$. For the panels in the upper row, the color bar does not show the full solution range.  }  \label{figNLS}
\end{center}
\end{figure}

\subsection{Universality of Umemura-type rogue patterns}
The link between rogue patterns and Umemura polynomials is certainly not restricted to the NLS equation. Those Umemura-type rogue patterns will reliably appear in many other integrable equations such as the Boussinesq equation, the derivative NLS equations, and the Manakov system, when internal parameters in their rogue solutions are of Umemura-type (\ref{acond2}), similar to the universality of rogue patterns associated with Yablonskii-Vorob'ev hierarchy polynomials that was reported earlier in \cite{Yang2021b,Yangbook2024}. We will briefly demonstrate this Umemura-type rogue universality on the Boussinesq equation below.

The Boussinesq equation was introduced in 1871 for the propagation of long surface waves on water of constant depth \cite{JBoussinesq1871,JBoussinesq1872}. After variable normalizations and variable shifts, this equation can be written as \cite{Clarkson_Boussi}
\[ \label{BoussiEq}
u_{tt}+u_{xx}-(u^2)_{xx}-\frac{1}{3}u_{xxxx}=0.
\]
Special rogue wave solutions in this equation were derived in \cite{Tajiri1,Rao2017,Clarkson_Boussi}.
General rogue waves in this equation were derived in \cite{BoussiRWs2020} and then simplified in \cite{Yang2021b}. Expressions of these simplified general rogue waves are \cite{Yang2021b}
\[  \label{Boussi_rogue}
u_{N}(x,t)=2 \partial_{x}^2 \ln \sigma ,
\]
where
\[ \label{SigmanAlg}
\sigma(x,t)=\det_{
\begin{subarray}{l}
1\leq i, j \leq N
\end{subarray}
}
\left(  \phi_{2i-1,2j-1} \right),
\]
\[ \label{matrixmij}
\phi_{i,j}=\sum_{\nu=0}^{\min(i,j)} \left(\frac{-1}{12}\right)^{\nu} \hspace{0.06cm} S_{i-\nu}(\textbf{\emph{x}}^{+} +\nu \textbf{\emph{s}}+\textbf{\emph{a}})  \hspace{0.06cm} S_{j-\nu}(\textbf{\emph{x}}^{-} + \nu \textbf{\emph{s}}-\textbf{\emph{a}}^*),
\]
vectors $\textbf{\emph{x}}^{\pm}=\left(  x_{1}^{\pm}, 0, x_{3}^{\pm}, 0, \cdots \right)$ are defined by
\[
x_{2j+1}^{+}=\frac{\sqrt{3}\hspace{0.04cm}\textrm{i}}{2\cdot 3^{2j+1}\cdot (2j+1)!} \hspace{0.03cm} \left( x + 2^{2j} \textrm{i}t \right), \quad x_{2j+1}^{-}=-\left(x_{2j+1}^{+}\right)^*,  \quad j=0, 1, 2, \cdots,
\]
$\emph{\textbf{s}}=(s_{1}, s_{2}, \cdots)$ are coefficients from the expansion
\[
\sum_{j=1}^{\infty} s_{j}\lambda^{j}=\ln\left[ \frac{2\textrm{i}\sqrt{3}}{\lambda} \tanh\frac{\lambda}{6}\tanh\left(\frac{\lambda}{6}+ \frac{2\textrm{i}\pi}{3}\right) \right], \label{skrkexpcoeffBoussi}
\]
$\textbf{\emph{a}}=(a_1, 0, a_3, 0, \dots, a_{2N-1})$ is the vector of internal parameters, and $a_{1}, a_{3}, a_{5}, \cdots, a_{2N-1}$ are free complex constants. The first few $s_k$ values from the expansion (\ref{skrkexpcoeffBoussi}) are
\begin{equation} \label{skB}
s_1=\frac{2 \rm{i}}{3 \sqrt{3}}, \quad s_2=-\frac{5}{108}, \quad s_3=-\frac{5 \rm{i}}{243 \sqrt{3}}.
\end{equation}

The fundamental Boussinesq rogue wave $u_1(x, t)$ is obtained by setting $N=1$ and $a_1=0$ in the above solution, and we get
\[ \label{u1Bou}
u_1(x, t)=2 \partial_{x}^2  \ln  \left(x^2+t^2+1\right)=
\frac{4 \left(t^2-x^2+1\right)}{\left(t^2+x^2+1\right)^2}.
\]
This rogue wave has a single hump of amplitude 4, flanked by two dips on the two sides of the $x$ direction. For $u_N(x, t)$ with higher $N$ values, more delicate wave patterns can appear depending on its internal parameter values \cite{Clarkson_Boussi, BoussiRWs2020,Yang2021b}.

In these rogue solutions $u_N(x, t)$ in Eq.~(\ref{Boussi_rogue}), when we take their internal parameters $(a_1, a_3, \dots, a_{2N-1})$ as Umemura-type parameters (\ref{acond2}) with large $|A|$, we can use an analysis similar to the NLS one in Sec.~\ref{secTheorem1} to obtain the following result.

\begin{thm} \label{Theorem2}
For the Boussinesq rogue wave $u_N(x, t)$ in Eq.~(\ref{Boussi_rogue}) under Umemura-type parameters (\ref{acond2}), asymptotic predictions for the rogue field at large $|A|$ are as follows.
\begin{enumerate}
\item
If $\mu$ is equal to one of $0, \pm 1, \dots, \pm (N-1)$, then the rogue wave $u_N(x, t)$ asymptotically splits into $N_p$ fundamental rogue waves located far away from the spatial-temporal origin, plus a $N_0$-th order rogue wave in the $O(1)$ neighborhood of the origin, where $N_p$ and $N_0$ are as given in Lemma~\ref{Lemma1}. These fundamental rogue waves are $u_1(x-\hat{x}_{0}, t-\hat{t}_{0})$, where $u_1(x, t)$ is given in Eq.~(\ref{u1Bou}), and their spatial-temporal locations $(\hat{x}_{0}, \hat{t}_{0})$ are given by
\[
\hat{x}_{0}+\textrm{i}\hspace{0.05cm}\hat{t}_{0}=\left(-2\rm{i} \sqrt{3}\right)z_{0}A-\frac{4}{3} (N-1), \label{x0t0B}
\]
with $z_{0}$ being each of the $N_p$ simple nonzero roots of the Umemura polynomial $U_{N}(z;\mu)$. The $N_0$-th order rogue wave $u_{N_0}(x, t)$ is given by Eq.~(\ref{Boussi_rogue}), with its internal parameters $(\hat{a}_1, \hat{a}_3, \cdots, \hat{a}_{2N_0-1})$ given as
\[  \label{akBoussi}
\hat{a}_{2k-1}=|\mu| \hspace{0.04cm} s_{2k-1}, \quad k=1, 2, \cdots, N_0,
\]
where $s_{2k-1}$ are defined by the expansion (\ref{skrkexpcoeffBoussi}). The errors of these fundamental rogue wave and $N_0$-th order rogue wave approximations are $O(|A|^{-1})$.
\item
If $\mu\ne 0, \pm 1, \dots, \pm (N-1)$, then the rogue wave $u_N(x, t)$ asymptotically splits into $N(N+1)/2$ fundamental rogue waves $u_1(x-\hat{x}_{0}, t-\hat{t}_{0})$, whose spatial-temporal locations $(\hat{x}_{0}, \hat{t}_{0})$ are given by Eq.~(\ref{x0t0B}), with $z_{0}$ being each of the $N(N+1)/2$ simple nonzero roots of the Umemura polynomial $U_{N}(z;\mu)$. The error of this fundamental rogue wave approximation is $O(|A|^{-1})$.
\end{enumerate}
\end{thm}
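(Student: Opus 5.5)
The proof will follow the NLS argument of Theorem~\ref{Theorem1}, with the differences coming from the Boussinesq structure. In the Boussinesq case the vector $\textbf{\emph{s}}$ has nonzero odd entries $s_1, s_3, \dots$, and the vectors $\textbf{\emph{x}}^{\pm}$ carry the prefactor $\sqrt{3}\,\textrm{i}/(2\cdot 3^{2j+1}(2j+1)!)$. We treat simple nonzero roots and the zero root separately. As in the NLS case, the negative-$\mu$ case reduces to positive $\mu$ by $A\to -A$, and $\mu=0$ is trivial. The absolute value $|\mu|$ in (\ref{akBoussi}) reflects this sign flip.

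\textbf{Simple nonzero roots.} We would use the $3N\times 3N$ determinant form of $\sigma$, with $\Phi_{i,j}=(-1/12)^{(j-1)/2}$-type scalings times $S_{2i-j}(\textbf{\emph{x}}^{+}+(j-1)\textbf{\emph{s}}+\textbf{\emph{a}})$. Apply the Laplace/Cauchy–Binet expansion and keep the highest-power terms in $A$. For $x_1^{+}=O(A)$ we write $x_1^{+}+a_1=A(\hat z+\mu)$ with $\hat z = x_1^{+}/A$. Then $S_k(\textbf{\emph{x}}^{+}+\nu\textbf{\emph{s}}+\textbf{\emph{a}})=A^k p_k(\hat z;\mu)[1+O(A^{-2})]$, because $x^{+}_{2j+1}$ is only $O(A)$ while $a_{2j+1}\sim A^{2j+1}$.

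It follows that $\sigma\sim |\alpha|^2|A|^{N(N+1)}|U_N(\hat z;\mu)|^2$. Near a simple root $z_0$, the next-order term in the expansion (as in \cite{Yang2024NLS,Yang2021b}) gives a $2\times2$-reduced determinant proportional to $(x_1^{+}-z_0A+\Delta)(x_1^{-}-z_0^*A+\Delta^*)+$const, which is the fundamental Boussinesq rogue wave (\ref{u1Bou}). Converting $x_1^{+}=\sqrt{3}\,\textrm{i}(x+\textrm{i}t)/6$ back to $(x,t)$ gives the factor $-2\textrm{i}\sqrt3$. The shift $-\frac43(N-1)$ arises from the $\nu s_1$ contributions, summed over the $N-1$ index shifts through $U_N'$-type terms. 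We would track this shift carefully, since it is where the nonzero $s_1$ enters.

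\textbf{Zero root.} For $1\le\mu\le N-1$ we reuse verbatim the factorization $\mathbf{\Phi}=\mathbf{F}\mathbf{G}$, $\mathbf{F}=\mathbf{D}_1\mathbf{H}\mathbf{D}_2$, and the row echelon form (\ref{defHtildeN}) of $\mathbf{H}$. That part depends only on $\textbf{\emph{a}}$, so it is identical to the NLS case. Following \cite{Yang2025NLS}, the leading-order $\sigma$ then reduces to $\hat\sigma$ of size $N_0$, with entries $S_{2i-j}(\textbf{\emph{x}}^{+}+(j-1+\mu)\textbf{\emph{s}})$ and the $\Psi$-analogue with $\textbf{\emph{x}}^{-}+(i-1+\mu)\textbf{\emph{s}}$.

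The key new step is that the $\mu\textbf{\emph{s}}$ shift can no longer be discarded, because the odd entries $s_{2k-1}\neq0$. We split $\mu\textbf{\emph{s}}$ into its odd and even parts. The even part $\mu(0,s_2,0,s_4,\dots)$ is removed by the index-structure technique of \cite{YangNLS2021}. The odd part $\mu(s_1,0,s_3,0,\dots)$ is absorbed as internal parameters $\hat a_{2k-1}=\mu s_{2k-1}$ in the $+$ factors. In the $-$ factors we need $-\hat a^*=\mu s$, and we would verify that $s_{2k-1}$ are purely imaginary (consistent with (\ref{skB})), so that $-(\mu s_{2k-1})^*=\mu s_{2k-1}$. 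This identifies $\hat\sigma$ with the Boussinesq $N_0$-th order $\sigma$ with parameters (\ref{akBoussi}). The $O(|A|^{-1})$ error follows from (\ref{sigmasigmahat}).

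\textbf{Main obstacle.} I expect the main obstacle to be the bookkeeping in the simple-root case: showing that the $O(1)$ correction to the location is exactly $-\frac43(N-1)$, independent of $z_0$ and $\mu$. I would first try to derive it from the Wronskian identity $p_k'=p_{k-1}$ together with the column-shift structure $(j-1)s_1$.
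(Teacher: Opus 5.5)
Your proposal is correct and follows essentially the paper's route. For the zero root, it reuses the NLS row-echelon reduction of $\mathbf{H}$ but keeps the $\mu\textbf{\emph{s}}$ shift, whose even part is removable and whose odd part becomes the internal parameters $\hat a_{2k-1}=|\mu| s_{2k-1}$; your observation that the $s_{2k-1}$ are purely imaginary is exactly what makes the two sides consistent. The $-\frac{4}{3}(N-1)$ offset for simple roots is attributed to $s_1\neq 0$, as in \cite{Yang2021b}.

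The only slip is your asymptotic $S_k(\textbf{\emph{x}}^{+}+\nu\textbf{\emph{s}}+\textbf{\emph{a}})=A^k p_k(\hat z;\mu)[1+O(A^{-2})]$. It should read $A^k[p_k(\hat z+\nu s_1/A;\mu)+O(A^{-2})]$, because the $\nu s_1$ term is only $O(A^{-1})$ relative. With that correction, your Wronskian idea settles the ``main obstacle'' at once. To first order only the last column's shift $(N-1)s_1$ survives, since the other column shifts create repeated columns. The dominant minor is therefore proportional to $U_N(\hat z+(N-1)s_1/A;\mu)$, and $(N-1)s_1\cdot 6/(\sqrt{3}\,\textrm{i})=\frac{4}{3}(N-1)$ gives exactly the stated offset.
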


We can see that these results for the Boussinesq equation are quite similar to those for the NLS equation in Theorem~\ref{Theorem1}. In both cases, the rogue patterns are closely linked to the roots of Umemura polynomials. The main difference between them is probably the internal-parameter values in the rogue wave  $u_{N_0}(x, t)$ near the origin. In the Boussinesq case, these internal parameters are no longer zero, see Eq.~(\ref{akBoussi}). This difference is caused by the fact that the $\mu$ terms in the reduced determinant $\hat{\sigma}_n$ in Eq.~(\ref{PhiPsihat}) cannot be dropped here since the odd elements $s_{2k-1}$ of the vector $\textbf{\emph{s}}$ are nonzero for the Boussinesq equation. Another difference between these results and those of NLS in Theorem~\ref{Theorem1} is that there is an extra term of $4/3(N-1)$ in the prediction (\ref{x0t0B}) for the spatial-temporal locations $(\hat{x}_{0}, \hat{t}_{0})$ of fundamental rogue waves. This extra term is caused by the fact that the $s_1$ coefficient in the Boussinesq case is nonzero (see Eq.~(\ref{skB})), and why this nonzero $s_1$ would induce that extra term was explained in \cite{Yang2021b}.

It is important for us to point out that, in the Boussinesq case, the predicted rogue wave $u_{N_0}(x, t)$ near the origin, with internal parameters in Eq.~(\ref{akBoussi}), is generally not a super rogue wave (i.e., the rogue wave with the highest possible peak amplitude among rogue waves of that order). This is obvious, because for a fixed $N_0$ value, where $N_0=N-|\mu|$, this predicted $u_{N_0}(x, t)$ wave can have different internal parameters (\ref{akBoussi}) due to different $|\mu|$ values, and it is impossible for these $u_{N_0}(x, t)$ waves at different internal parameters to be all super rogue waves. For example, at $(N, \mu)$ values of $(3, 1)$, $(4, 2)$ and $(5,3)$, where $N_0=2$, we get three predicted $u_2(x, t)$ waves in the neighborhood of the origin with internal parameter values $(\hat{a}_1, \hat{a}_3)$ as $(s_1, s_3)$, $(2s_1, 2s_3)$ and $(3s_1, 3s_3)$, respectively. The peak amplitudes of these three predicted $u_2(x, t)$ waves are approximately 5.4326, 5.4995 and 5.4743, while the peak amplitude of the second-order super Boussinesq rogue wave is 5.5 \cite{BoussiRWs2020}. Thus, none of these three predicted $u_2(x, t)$ waves is a super rogue wave. This fact contrasts the NLS case in Theorem~\ref{Theorem1}, where the predicted rogue wave $u_{N_0}(x, t)$ near the origin is always a super rogue wave.

Now, we use two examples to confirm the asymptotic predictions in Theorem~\ref{Theorem2}. In these examples, we take rogue wave solutions $u_5(x, t)$ with Umemura-type parameters (\ref{acond2}), where $(\mu, A)=(2, 4)$ and $(1/100, 10)$, respectively. For these two sets of parameters, the exact Boussinesq rogue waves $u_5(x, t)$ from Eq.~(\ref{Boussi_rogue}) are plotted in the left panels of Fig.~\ref{figBou}. In the upper left panel, the wave field comprises 9 approximate fundamental rogue waves on two concentric arcs, plus a more delicate wave structure at the arc center which, upon closer examination, proves to be an approximate third-order rogue wave \cite{BoussiRWs2020,Yang2021b}. In the lower left panel, the wave field comprises 15 approximate fundamental rogue waves located on two concentric rings and the ring center. In both panels, one can notice a close connection between the wavefield patterns and root structures of the associated Umemura polynomials $U_{5}(z;\mu)$ in Fig.~\ref{fig1}(b, d). In addition, these wavefield patterns closely resemble those of NLS rogue waves in Fig.~\ref{figNLS}. For comparison, the predicted rogue solutions from Theorem~\ref{Theorem2} are plotted in the right panels of Fig.~\ref{figBou}. One can see that the predicted patterns closely match the true ones.

\begin{figure}[htb]
\begin{center}
\includegraphics[scale=0.35, bb=350 000 650 1050]{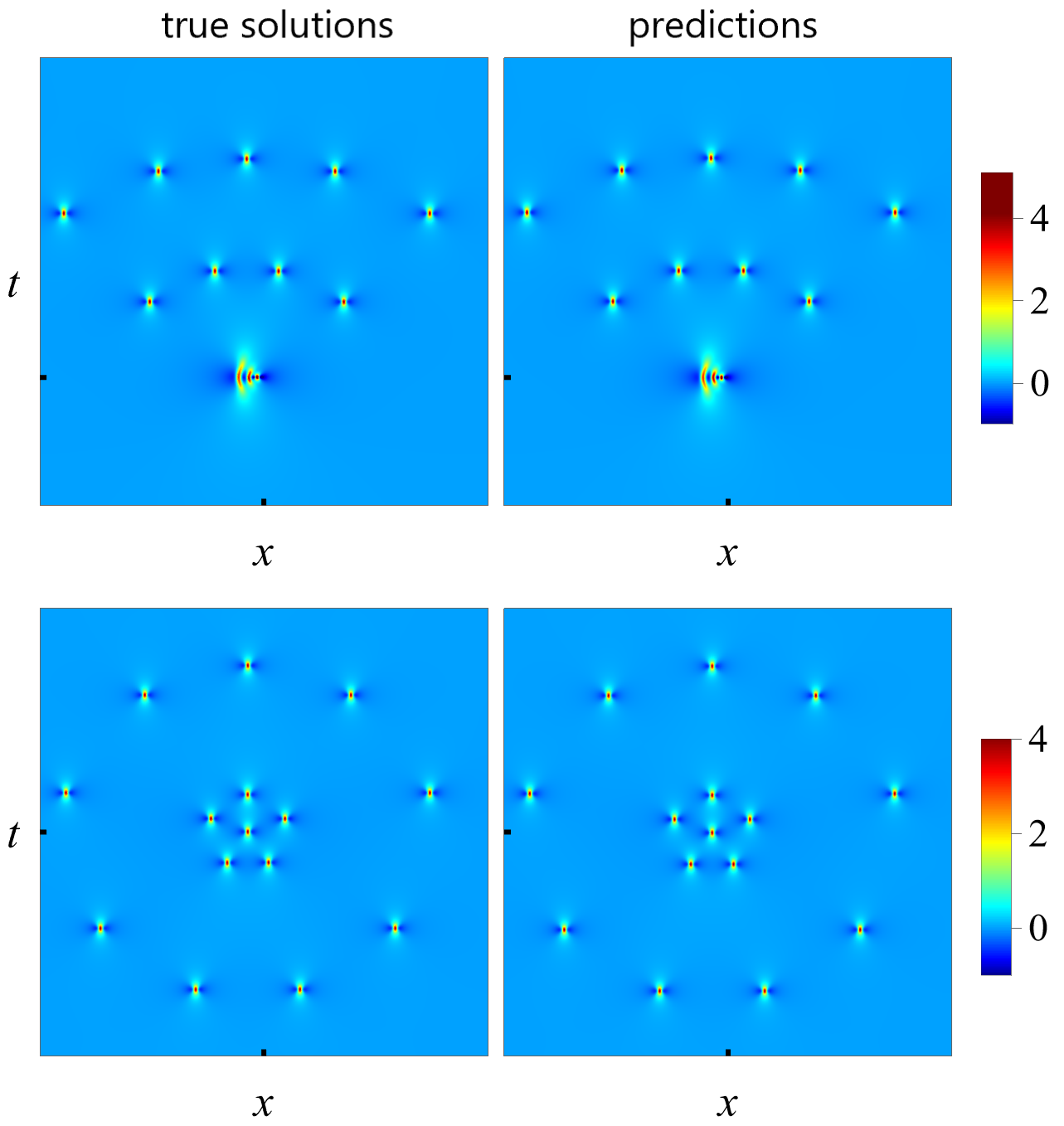}
\caption{ Comparisons between true Boussinesq rogue solutions $u_5(x,t)$ (left panels) and their analytical predictions (right panels) under Umemura-type internal parameters (\ref{acond2}).  Upper row: $A=4,\ \mu=2$; lower row: $A=10,\ \mu=1/100$. In the upper row,  $ -70 \leq x \leq 70,\  -40 \leq t  \leq 100$; in the lower row, $ -75 \leq x, t  \leq 75$.  }  \label{figBou}
\end{center}
\end{figure}

\section{Lump patterns associated with Umemura polynomials in the KPI equation} \label{sec:KP}

The Kadomtsev-Petviashvili-I (KPI) equation is
\[\label{KPI}
\left( u_t+ 6 u u_x +u_{xxx} \right)_x - 3u_{yy} =0.
\]
This equation arises in many branches of physics, such as plasma waves \cite{KP1970}, water waves \cite{Ablowitz1979}, nonlinear optics \cite{Pelinovsky1995,Baronio2016,KPIexperiment} and Bose-Einstein condensates \cite{BarashenkovBEC,Tsuchiya_BEC2008,Panos2018}. It is solvable by the inverse scattering transform \cite{Zakharov_book, Ablowitz_book}.

\subsection{A special class of higher-order lump solutions}
The KPI equation admits many types of solutions, such as fundamental lumps \cite{Petviashvili1976,Manakov1977,Ablowitz_Satsuma1979}, multi-lumps that describe the elastic interactions of fundamental lumps with different asymptotic velocities \cite{Manakov1977,Ablowitz_Satsuma1979}, higher-order lumps that describe anomalous scatterings of fundamental lumps with the same asymptotic velocities \cite{Peli93b,Ablowitz97,Ablowitz2000}, and others \cite{He_KPI_rogue_lump,Zakharov2021}. In this article, we are concerned with higher-order lumps (also called multi-pole lumps in the literature \cite{Ablowitz97,Ablowitz2000}).  These solutions are interesting because interactions of their constituent fundamental lumps are not elastic at all, and patterns of the constituent fundamental lumps can change drastically after interactions.

Higher-order lumps have been derived and studied in many articles before \cite{Peli93b,Ablowitz97,Ablowitz2000,DGKM2010,
Chen2016,Clarkson_Boussi,Gaillard2018,Chang2018,Guo2022,YangYangKPI, YangYangKPI2}. After parameter normalizations due to invariances of the equation, the simplest expressions of these higher-order lumps are \cite{YangYangKPI, YangYangKPI2}
\[ \label{Schpolysolu0}
u_{N}(x,y,t)=2 \partial_{x}^2 \ln \sigma,
\]
where
\[\label{Blockmatrix0}
\sigma(x,y,t)=\det_{1 \leq i,j \leq N}\left(\phi_{ij}\right),
\]
$N$ is a positive integer,
\[ \label{Schmatrimnij0}
\phi_{i, j}=\sum_{\nu=0}^{\min(n_i,n_j)} \frac{1}{4^{\nu}} \hspace{0.06cm} S_{n_i-\nu}\left(\textbf{\emph{x}}^{+} +\nu \textbf{\emph{s}}+\textbf{\emph{a}}_{i} \right)  \hspace{0.06cm} S_{n_j-\nu}\left((\textbf{\emph{x}}^{+})^* + \nu \textbf{\emph{s}}+ \textbf{\emph{a}}_{j}^*\right),
\]
$\Lambda \equiv (n_{1}, n_{2}, \cdots n_N)$ is an order-index vector of arbitrary distinct positive integers, the vector $\textbf{\emph{x}}^{+}=\left( x_{1}^{+}, x_{2}^{+},\cdots \right)$ is defined as
\[
x_{k}^{+}= \frac{x+2^k\textrm{i} y-3^k(4t)}{k!},   \quad k\ge 1, \label{defxrp}
\]
the vector $\textbf{\emph{s}}=(s_1, s_2, \cdots)$ is as given in Eq.~(\ref{sexpand}), vectors $\textbf{\emph{a}}_{i}$ are internal parameters with $\textbf{\emph{a}}_{i}=\left( a_{i,1}, a_{i,2}, \cdots , a_{i,n_i}\right)$, and $a_{i, j} \hspace{0.05cm} (1\le i\le  N,  1\le j\le n_i)$ are free complex constants.

Large-time patterns of these lump solutions were determined in \cite{YangYangKPI, YangYangKPI2}, and those patterns were found to be associated with Yablonskii-Vorob'ev polynomials, Wronskian-Hermite polynomials and certain two-term polynomials (these two-term polynomials are special members of the Yablonskii-Vorob'ev hierarchy polynomials). To obtain lump patterns associated with Umemura polynomials, we need to restrict to a special class of the above solutions.

First, we need to choose the order-index vector $\Lambda$ as $\Lambda=(1, 3, 5, \dots, 2N-1)$. We also need to choose internal parameters $\textbf{\emph{a}}_{i}$ so that $a_{i, j}$ is independent of the $i$ index. In this case, each $\textbf{\emph{a}}_{i}$ vector with $i<N$ is just a truncation of the longest vector $\textbf{\emph{a}}_{N}$. Since every $\textbf{\emph{a}}_{i}$ can be extended to the full $\textbf{\emph{a}}_{N}$ (or even longer), and the extended parts are dummy parameters which do not appear in the actual solution formulae, by performing this $\textbf{\emph{a}}_{i}$ extension, we can say all $\{\textbf{\emph{a}}_{i}\}$ vectors are the same in this case and thus denote $\textbf{\emph{a}}_{i}=\textbf{\emph{a}}$, where $\textbf{\emph{a}}=(a_1, a_2, a_3, \dots, a_{2N-1})$.

Under the above $\Lambda$ and $\textbf{\emph{a}}_{i}$ choices, we can use the technique outlined in Appendix A of Ref.~\cite{YangNLS2021} to eliminate all $x_{even}^+$ and $a_{even}$ terms from the vectors $\textbf{\emph{x}}^{+}$ and $\textbf{\emph{a}}$ without affecting the $u_{N}(x,y,t)$ solution. The reduced expressions of this special class of higher-order lump solutions are given in the following lemma.

\begin{lem}  \label{Lemma4}
The KPI equation (\ref{KPI}) admits a special class of higher-order lumps whose expressions are
\[ \label{Schpolysolu}
u_N(x,y,t)=2 \partial_{x}^2 \ln \sigma,
\]
where $N$ is a positive integer representing the order of the solution,
\[\label{Blockmatrix}
\sigma(x,y,t)=\det_{\begin{subarray}{l}
1\leq i, j \leq N
\end{subarray}}
\left( \phi_{2i-1,2j-1} \right),
\]
\[ \label{Schmatrimnij}
\phi_{i, j}=\sum_{\nu=0}^{\min(i,j)} \frac{1}{4^{\nu}} \hspace{0.06cm} S_{i-\nu}\left(\textbf{\emph{x}}^{+} +\nu \textbf{\emph{s}}+\textbf{\emph{a}} \right)  \hspace{0.06cm} S_{j-\nu}\left((\textbf{\emph{x}}^{+})^* + \nu \textbf{\emph{s}}+ \textbf{\emph{a}}^*\right),
\]
the vector ${\textbf{\emph{x}}}^{+} \equiv \left(x_1^+, 0, x_3^+, 0, \cdots\right)$ is defined by
\[ \label{defxkKP}
x_{2j-1}^{+}= \frac{x+2^{2j-1} \textrm{i} y-3^{2j-1}(4t)}{(2j-1)!}, \quad j\ge 1,
\]
the vector $\textbf{\emph{s}}=(0, s_2, 0, s_4, \cdots)$ is as given in Eq.~(\ref{sexpand}), $\textbf{\emph{a}}=(a_1, 0, a_3, 0, \dots, a_{2N-1})$ is the vector of internal parameters, and $a_{1}, a_{3}, a_{5}, \cdots, a_{2N-1}$ are free complex constants.
\end{lem}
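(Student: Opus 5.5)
The plan is to start from the general higher-order lump solution (\ref{Schpolysolu0})--(\ref{Schmatrimnij0}) of Refs.~\cite{YangYangKPI, YangYangKPI2}, which we take as known, and specialize it in two steps. First, set $\Lambda=(1,3,\dots,2N-1)$ and make $a_{i,j}$ independent of $i$. Then $\phi_{i,j}$ in (\ref{Schmatrimnij0}) becomes $\phi_{n_i,n_j}$ with a common parameter vector. Extending every $\textbf{\emph{a}}_i$ to the common vector $\textbf{\emph{a}}=(a_1,\dots,a_{2N-1})$ does not change the solution, because $S_{n_i-\nu}$ only involves components with index at most $n_i$. So $\sigma=\det_{1\le i,j\le N}(\phi_{2i-1,2j-1})$, which is (\ref{Blockmatrix}) before the even entries are removed. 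Here $\textbf{\emph{x}}^+$ still has all components $x_k^+$, and $\textbf{\emph{a}}$ still has all $a_k$.

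Second, we remove the even-index entries $x_{2k}^+$ and $a_{2k}$. Following Appendix A of \cite{YangNLS2021}, we write $\sigma$ as the $3N\times 3N$ determinant (\ref{sigma3Nby3N}), so $\sigma=\det(\mathbf{\Phi}\mathbf{\Psi})$ with $\Phi_{i,j}=2^{-(j-1)}S_{2i-j}(\textbf{\emph{x}}^++(j-1)\textbf{\emph{s}}+\textbf{\emph{a}})$ and $\Psi$ defined analogously. Let $\textbf{\emph{y}}=\textbf{\emph{x}}^++\textbf{\emph{a}}$, and use the splitting (\ref{Sjsplit}) to separate off the even part $\textbf{\emph{e}}=(0,y_2,0,y_4,\dots)$:
\[
S_k(\textbf{\emph{y}}_{odd}+\textbf{\emph{e}}+\nu\textbf{\emph{s}})=\sum_{l\ge0} S_{2l}(\textbf{\emph{e}})\,S_{k-2l}(\textbf{\emph{y}}_{odd}+\nu\textbf{\emph{s}}),
\]
where $S_{odd}(\textbf{\emph{e}})=0$ because $\textbf{\emph{e}}$ has only even entries. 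Since $\textbf{\emph{s}}$ has only even entries by (\ref{sexpand}), the shift by $\nu\textbf{\emph{s}}$ does not break this parity structure. Row $i$ of $\mathbf{\Phi}$ carries indices $2i-j$, and row $i-l$ carries $2i-2l-j$. Hence row $i$ equals row $i$ of the reduced matrix plus $\sum_{l\ge1}S_{2l}(\textbf{\emph{e}})$ times row $i-l$ of the reduced matrix. This is a unit lower-triangular row operation, so it leaves $\det(\mathbf{\Phi}\mathbf{\Psi})$ unchanged. The same argument applies to the columns of $\mathbf{\Psi}$ with $(\textbf{\emph{x}}^+)^*+\textbf{\emph{a}}^*$.

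It follows that $\sigma$, and therefore $u_N=2\partial_x^2\ln\sigma$, is unchanged when all $x_{2k}^+$ and $a_{2k}$ are set to zero. The odd components $x_{2j-1}^+$ are then exactly those in (\ref{defxkKP}), which completes the argument. The main obstacle I expect is bookkeeping in the second step. The row index jumps by two, which is what makes the reduction possible, but the entries also carry the column-dependent shift $(j-1)\textbf{\emph{s}}$ and the factor $2^{-(j-1)}$. I must check that the even-part expansion factors out uniformly in $j$. It does, because $S_{2l}(\textbf{\emph{e}})$ does not depend on $j$, but only for the even part of $\textbf{\emph{y}}$. The even part of $\textbf{\emph{s}}$ must stay inside the reduced Schur polynomials, since its weight $(j-1)$ depends on the column, and this is why it remains in the statement of Lemma~\ref{Lemma4}.
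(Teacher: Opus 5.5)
Your proposal is correct and follows the same route as the paper. You specialize to $\Lambda=(1,3,\dots,2N-1)$ with $i$-independent $a_{i,j}$, extend each $\textbf{\emph{a}}_i$ to the common $\textbf{\emph{a}}$, and then remove $x^+_{even}$ and $a_{even}$ by the technique of Appendix A of Ref.~\cite{YangNLS2021}. You have correctly made that technique explicit as unit-triangular factorizations $\mathbf{\Phi}=\mathbf{L}\widetilde{\mathbf{\Phi}}$ and $\mathbf{\Psi}=\widetilde{\mathbf{\Psi}}\mathbf{M}$ inside $\sigma=\det(\mathbf{\Phi}\mathbf{\Psi})$.

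One small remark: the evenness of $\textbf{\emph{s}}$ is not actually needed in the removal step. The splitting only uses that the even part $\textbf{\emph{e}}$ is column-independent and that $S_{2l+1}(\textbf{\emph{e}})=0$. This is why the same reduction also works for the Boussinesq solutions, where $s_{odd}\neq 0$.
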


We will be concerned with this special class of higher-order lumps in the rest of this section.

The fundamental lump of the KPI equation is obtained when we choose $N=1$ and $a_1=0$ in the above class of  lump solutions, and its expression is
\[ \label{defu1}
u_1(x,y,t)=2 \partial_{x}^2 \ln \left(\left(x-12t  \right)^2+4y^2+ \frac{1}{4}\right)=
\frac{1-4(x-12 t)^2+16 y^2}{\left((x-12t)^2+4 y^2+\frac{1}{4}\right)^2}.
\]
This is a moving single lump with peak amplitude $16$, which is attained at the spatial location of $(x, y)=(12t, \hspace{0.05cm} 0)$. Very recently, this fundamental lump was observed in nonlinear photorefractive crystals for the first time \cite{KPIexperiment}.

\subsection{Connections between KPI lump solutions and NLS rogue wave solutions}

By comparing this special class of higher-order lumps of KPI in Lemma~\ref{Lemma4} with the general rogue wave solutions of NLS in Lemma~\ref{Lemma2}, we can see that their expressions are very similar. In fact, we can see immediately that the $\sigma_0(x, t)$ function of NLS in Eq.~(\ref{sigma_n}) becomes the above $\sigma(x, y, t)$ function of KPI in Eq.~(\ref{Blockmatrix}) under the following variable substitutions:
\[ \label{transform}
t \hspace{0.02cm} \to \hspace{0.02cm} 2y, \quad a_{2j-1} \hspace{0.02cm} \to \hspace{0.02cm} a_{2j-1}-\frac{3^{2j-1}(4t)}{(2j-1)!}, \hspace{0.15cm} 1\le j\le N.
\]
That is,
\[ \label{NLSKP0}
\sigma^{[KP]}(x, y, t; a_1, a_3, \dots, a_{2N-1})=\sigma_0^{[NLS]}(x, t; a_1, a_3, \dots, a_{2N-1})|_{t \hspace{0.02cm} \to \hspace{0.02cm} 2y, \hspace{0.05cm} a_{2j-1} \hspace{0.02cm} \to \hspace{0.02cm} a_{2j-1}-3^{2j-1}(4t)/(2j-1)!}.
\]
In addition, under the above variable substitutions, the KPI's lump solution $u_N(x, y, t)$ in Eq.~(\ref{Schpolysolu}) can be obtained directly from NLS's rogue wave solution $u_N(x, t)$ in Eq.~(\ref{uNform}) as
\[ \label{NLSKP}
u_N^{[KP]}(x, y, t; a_1, a_3, \dots, a_{2N-1})=\left. 2\left(|u_N^{[NLS]}(x, t; a_1, a_3, \dots, a_{2N-1})|^2-1\right)\right|_{t \hspace{0.02cm} \to \hspace{0.02cm} 2y, \hspace{0.05cm} a_{2j-1} \hspace{0.02cm} \to \hspace{0.02cm} a_{2j-1}-3^{2j-1}(4t)/(2j-1)!}.
\]
The reason is that the $\sigma_0$ and $\sigma_1$ functions in the NLS rogue solution (\ref{uNform}) satisfy the bilinear equation
\[
(D_x^2 + 2) \hspace{0.04cm} \sigma_0 \cdot \sigma_0 = 2|\sigma_1|^2,
\]
where $D$ is Hirota's bilinear differential operator, see Ref.~\cite{OhtaJY2012}. Utilizing this bilinear equation, one can easily check that $2(|\sigma_1/\sigma_0|^2-1)=2 \partial_{x}^2 \ln \sigma_0$, which leads to Eq.~(\ref{NLSKP}) in view of Eqs.~(\ref{uNform}), (\ref{Schpolysolu}) and (\ref{NLSKP0}).

From the above KPI-NLS connection (\ref{NLSKP}), we can immediately see that the highest possible peak amplitude of the special KPI lump solutions $u_N(x, y, t)$ in Lemma~\ref{Lemma4} is $2[(2N+1)^2-1]$, i.e., $8N(N+1)$, because the highest possible peak amplitude of NLS rogue waves $u_N(x, t)$ is believed to be $2N+1$ (see \cite{AAS2009, Clarkson2010, He2017, Miller2020, Yangbook2024}). In addition, the lump solution $u_N(x, y, t)$ in Lemma~\ref{Lemma4} with all-zero internal parameters (i.e., $a_1=a_3=\dots=a_{2N-1}=0$) is a super lump, i.e., it reaches that highest possible peak amplitude $8N(N+1)$, at $x=y=t=0$. When time becomes large positive or negative, this super lump splits into $N(N+1)/2$ fundamental lumps \cite{YangYangKPI}. Notice that the amplitude of each fundamental lump is 16, and $16$ times $N(N+1)/2$ is $8N(N+1)$, which is the amplitude of this super lump at $t=0$, this super lump at $t=0$ then can be viewed as $N(N+1)/2$ fundamental lumps coherently added together.

We should point out that the connection between KPI's lump solutions and NLS' rogue solutions has been reported before in Ref.~\cite{DGKM2010}. However, the variable substitutions from NLS rogue waves to KPI lumps in \cite{DGKM2010} are very different from ours in Eq.~(\ref{transform}), because the solution form and parameterizations of NLS rogue waves in \cite{DGKM2010} are very different from ours in Lemma~\ref{Lemma2}. In \cite{DGKM2010}, the $N$-th order NLS rogue waves $u_N(x, t)$ were expressed by determinants with matrix elements involving differential operators, and they were parameterized by $2N$ real constants $\varphi_{j}$ ($1\le j\le 2N$). The variable substitutions from their NLS rogue solutions to KPI lumps were $t\to y$ and $\varphi_3\to t$, which is in stark contrast with our variable substitutions (\ref{transform}), primarily in the mapping of the parameters.

\subsection{Lump patterns under Umemura-type internal parameters}
Lump patterns associated with Umemura polynomials will arise in the special class of higher-order lump solutions of Lemma~\ref{Lemma4} in the parameter regime
\[ \label{acondKP}
a_{2j-1}=\frac{\mu}{2j-1} \hspace{0.04cm} A^{2j-1}, \quad 1\le j\le N, \quad
t=O(1),
\]
where $\mu$ is a free $O(1)$ complex constant, and $A$ is another free complex constant with large modulus, i.e., $|A|\gg 1$. In this parameter regime, we have the following theorem.

\begin{thm} \label{Theorem3}
For the $N$-th order KPI lump solution $u_N(x, y, t)$ in Lemma~\ref{Lemma4} under Umemura-type parameters (\ref{acondKP}), asymptotic predictions of the solution field at large $|A|$ are as follows.
\begin{enumerate}
\item
If $\mu$ is equal to one of $0, \pm 1, \dots, \pm (N-1)$, then the wave field asymptotically splits into $N_p$ fundamental lumps located far away from the lump center $(x_0, y_0)=(12t, 0)$, plus a $N_0$-th order super lump in the $O(1)$ neighborhood of this lump center, where $N_p$ and $N_0$ are as given in Lemma~\ref{Lemma1}. These fundamental lumps are $u_{1}(x-\hat{x}_{0}, \hspace{0.04cm}  y-\hat{y}_{0}, t)$, where $u_1(x,y,t)$ is given in Eq.~(\ref{defu1}), and their positions $(\hat{x}_{0}, \hat{y}_{0})$ are given by
\[\label{x0y0}
\hat{x}_0+2\textrm{i}\hat{y}_0= z_0 A,
\]
with $z_{0}$ being each of the $N_p$ simple nonzero roots of the Umemura polynomial $U_{N}(z;\mu)$. The peak of each fundamental lump is spatially located at $(x, y)=(12 t+\hat{x}_0, \hat{y}_0)$. The $N_0$-th order super lump $u_{N_0}(x, y, t)$ is given by Eqs.~(\ref{Schpolysolu})-(\ref{defxkKP}) with all its internal parameters $(\hat{a}_1, \hat{a}_3, \cdots, \hat{a}_{2N_0-1})$ being zero. The errors of these fundamental-lump and $N_0$-th order super lump approximations are $O(|A|^{-1})$. If $(x,y)$ is not in the $O(1)$ neighborhood of these fundamental lumps and $N_0$-th order super lump, then $u_N(x,y,t)$ asymptotically approaches zero as $|A| \to \infty$.
\item
If $\mu\ne 0, \pm 1, \dots, \pm (N-1)$, then the wave field asymptotically splits into $N(N+1)/2$ fundamental lumps $u_1(x-\hat{x}_{0}, \hspace{0.04cm}  y-\hat{y}_{0}, t)$, where $(\hat{x}_{0}, \hat{y}_{0})$ are given by Eq.~(\ref{x0y0}), with $z_{0}$ being each of the $N(N+1)/2$ simple nonzero roots of the Umemura polynomial $U_{N}(z;\mu)$. The error of this fundamental-lump approximation is $O(|A|^{-1})$.
\end{enumerate}
\end{thm}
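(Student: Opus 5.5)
The natural route is to deduce Theorem~\ref{Theorem3} from Theorem~\ref{Theorem1} through the NLS--KPI connection (\ref{NLSKP}). The substitution (\ref{transform}) turns the KPI internal parameters into effective NLS parameters
\[
\tilde a_{2j-1}=\frac{\mu}{2j-1}A^{2j-1}-\frac{3^{2j-1}(4t)}{(2j-1)!}, \quad 1\le j\le N .
\]
Since $t=O(1)$, these differ from the Umemura-type parameters (\ref{acond2}) by $O(1)$ perturbations. The plan is therefore to rerun the proof of Theorem~\ref{Theorem1} with these perturbed parameters and check that the $O(1)$ shifts only move the constituent rogue waves by $O(1)$. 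Evaluated in the rescaled variables, they do not change the leading-order predictions beyond the stated $O(|A|^{-1})$ error, except for the explicit $-12t$ shift in $x$.

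The first step handles the $a_1$ perturbation. The shift $\tilde a_1=\mu A-12t$ can be absorbed exactly into $x$, because $x_1^{+}$ and $x_1^{-}$ in (\ref{defxk}) both contain $x$ and $a_1$ enters only additively in $x_1^{\pm}$. Replacing $x$ by $x-12t$ leaves the higher entries $x_{2j+1}^{\pm}$ perturbed by $-12t/(2j+1)!$, which combines with the remaining $O(1)$ shifts $-3^{2j+1}(4t)/(2j+1)!$ in $\tilde a_{2j+1}$. These explain the lump center $(12t,0)$.

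The second step treats the simple nonzero roots $z_0$. Near $x+\textrm{i}t_{\rm NLS}=z_0A$ I would use the standard Adler--Moser-type argument of \cite{Yang2024NLS}. The leading term of $\sigma$ is $|A|^{\cdots}|U_N(z;\mu)|^2$ with $z=A^{-1}(x_1^{+})$, and the $O(1)$ parameter perturbations enter only through $S_k(\textbf{a})$ at relative order $A^{-2}$ or lower in the $k$-th row. Since the root is simple, expanding $U_N$ to first order gives a Peregrine wave at $\hat{x}_0+\textrm{i}\hat{t}_0=z_0A+O(A^{-1})$. Mapping back with $t_{\rm NLS}=2y$ and $x\to x-12t$ gives (\ref{x0y0}). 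Then (\ref{NLSKP}) converts the Peregrine profile (\ref{Pere}) into $2(|\hat u_1|^2-1)$, which a direct computation identifies with the fundamental lump (\ref{defu1}).

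The third step is the $O(1)$ neighborhood of the center when $\mu\in\{0,\pm1,\dots,\pm(N-1)\}$, and I expect it to be the main obstacle. I would repeat the factorization $\mathbf{\Phi}=\mathbf{F}\mathbf{G}$. With the perturbed parameters, $S_k(\tilde{\textbf a})=\sum_i A^{k-i}h_{k-i}S_i(\textbf{b})$, where $\textbf b$ is the $O(1)$ vector of $t$-dependent corrections. This equals $A^k h_k+O(A^{k-1})$. The matrix $\mathbf H$ therefore acquires only lower-order corrections after the $\mathbf D_1,\mathbf D_2$ scaling, and its row echelon form $\widetilde{\mathbf H}$ in (\ref{defHtildeN}) is unchanged at leading order. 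The delicate point is that the $\mathbf{B}$ block has zeros in the slots where the $O(A^{-1})$ corrections could compete. One must verify that after elimination the surviving corrections either cancel or contribute only at relative order $O(|A|^{-1})$ to $\hat\sigma_n$. The natural approach is to absorb them: the $t$-terms should combine exactly into the KPI variables $x_{2j-1}^{+}$ of (\ref{defxkKP}), restricted to order $N_0$. Then $\hat\sigma_n$ becomes the $N_0$-th order KPI determinant of Lemma~\ref{Lemma4} with zero internal parameters after removing the $\mu\textbf{s}$ shift exactly as in the NLS proof. That removal is allowed because the odd $s_k$ vanish.

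Finally, the super lump and the decay to zero away from these regions follow from (\ref{NLSKP}) together with the background statement in Theorem~\ref{Theorem1}. Case 2 follows from the second step alone.
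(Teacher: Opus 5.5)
Your proposal is correct, but it reaches the result by a different route from the paper's proof. The paper never passes through NLS or Theorem~\ref{Theorem1}. It works directly with the KPI determinant, where the $t$-dependence already sits inside $\mathbf{x}^{+}$ through (\ref{defxkKP}). Near a simple root it keeps the two dominant Laplace terms of (\ref{sigmanLap}) and obtains $\sigma\propto (x-\hat{x}_0-12t)^2+4(y-\hat{y}_0)^2+\frac{1}{4}$ outright. It therefore needs neither your shift $x\to x-12t$ nor the $\sigma_1$/Peregrine asymptotics and the bilinear identity behind (\ref{NLSKP}). At the center it splits $S_k(\mathbf{x}^{+}+\nu\mathbf{s}+\mathbf{a})=\sum_i S_{k-i}(\mathbf{a})S_i(\mathbf{x}^{+}+\nu\mathbf{s})$, using the exact Umemura vector $\mathbf{a}$ and the KPI vector $\mathbf{x}^{+}$. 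Then $\mathbf{H}$ is literally the NLS matrix and the NLS reduction applies verbatim. By (\ref{NLSKP0}), $\sigma^{[KP]}$ is identical to $\sigma_0^{[NLS]}$ with shifted parameters, so your route is equivalent but longer. What it buys is a conceptual explanation of why the KPI and NLS predictions coincide; the paper's route is shorter and self-contained.

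In your third step, the first option you list (showing the corrections cancel or are $O(|A|^{-1})$) cannot work; your fallback is the only way. Write $\tilde{\mathbf{a}}=\mathbf{a}+\mathbf{b}$, so that $\tilde{\mathbf{F}}=\mathbf{F}\mathbf{T}$ with $T_{mj}=S_{m-j}(\mathbf{b})$. Under the same row operations, the zero entry $l$ columns to the left of a pivot of $\mathbf{B}$ picks up $A^{-l}S_l(\mathbf{b})$. After the $\mathbf{D}_2$ scaling this is exactly the same order as the pivot. This must be so: the super lump you want to recover depends on $t$ at leading order, and that dependence enters only through $\mathbf{b}$. The absorption, however, is exact, since $\mathbf{T}\mathbf{G}_{\mathrm{NLS}}=\mathbf{G}_{\mathrm{KP}}$ by the Schur addition formula column by column. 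That is precisely the paper's choice of $\mathbf{G}$, and with it there is no obstacle at all.

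Two minor points. First, writing $x=\hat{x}+12t$ adds $+12t/(2j+1)!$, not $-12t/(2j+1)!$, to $x^{\pm}_{2j+1}$. Second, the decay-to-zero claim cannot be quoted verbatim from Theorem~\ref{Theorem1}, because your effective NLS parameters are only $O(1)$-perturbations of Umemura type. The paper does not prove that part either.
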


{\bf Proof.} We first rewrite the determinant of $\sigma(x, y, t)$ in Eq.~(\ref{Blockmatrix}) as a larger $3N \times 3N$ determinant \cite{OhtaJY2012}
\[ \label{3Nby3Ndet2}
\sigma=\left|\begin{array}{cc}
\textbf{O}_{N\times N} & \Phi_{N\times 2N} \\
-\Psi_{2N\times N} & \textbf{I}_{2N \times 2N} \end{array}\right|,
\]
where
\[
\Phi_{i,j}=2^{-(j-1)} S_{2i-j}\left(\textbf{\emph{x}}^{+} + (j-1) \textbf{\emph{s}} +\textbf{\emph{a}}\right), \quad \Psi_{i,j}=2^{-(i-1)} S_{2j-i}\left((\textbf{\emph{x}}^{+})^* + (i-1) \textbf{\emph{s}}+\textbf{\emph{a}}^*\right).
\]
Performing the Laplace expansion to this larger determinant, we get
\[ \label{sigmanLap}
\sigma=\sum_{1\leq\nu_{1} < \nu_{2} < \cdots < \nu_{N}\leq 2N}\left|
\det_{1 \leq i, j\leq N} \Phi_{i, \nu_j}\right|^2.
\]

We first prove that when $(x-\hat{x}_{0})^2+(y-\hat{y}_{0})^2=O(1)$, where $(\hat{x}_{0}, \hat{y}_{0})$ are given by Eq.~(\ref{x0y0}), then the higher-order lump solution $u_N(x, y, t)$ reduces to a fundamental lump $u_{1}(x-\hat{x}_{0}, \hspace{0.04cm}  y-\hat{y}_{0}, t)$ in this neighborhood. In this proof, we will use the Laplace-expansion form (\ref{sigmanLap}) of the $\sigma$ function. To proceed, we notice that under Umemura-type parameters (\ref{acondKP}) we have the asymptotics
\begin{eqnarray}
&& S_k\left(\textbf{\emph{x}}^{+} + \nu \textbf{\emph{s}} +\textbf{\emph{a}}\right)=S_k\left(x+2\textrm{i}y-12t+a_1, 0, x/6+4\textrm{i}y/3-18t+a_3, 0, \dots\right)  \nonumber \\
&& = S_k\left(x+2\textrm{i}y-12t+a_1, 0, a_3, 0, a_5, 0, \dots\right)\left(1+O(|A|^{-2})\right)  \nonumber \\
&& = S_k\left(\hat{x}_0+2\textrm{i}\hat{y}_0+a_1+\hat{x}+2\textrm{i}\hat{y}-12t, 0, a_3, 0, a_5, 0, \dots\right)\left(1+O(|A|^{-2})\right)  \nonumber \\
&& = A^k S_k\left[ z_0+ \mu + A^{-1}\left(\hat{x}+2\textrm{i}\hat{y}-12t\right), 0, \mu/3, 0, \mu/5, 0, \dots\right]\left(1+O(|A|^{-2})\right)  \nonumber \\
&& = A^k p_k\left(z_0+A^{-1}(\hat{x}+2\textrm{i}\hat{y}-12t); \mu\right)\left(1+O(|A|^{-2})\right),   \label{SkasymKP}
\end{eqnarray}
where $\hat{x}\equiv x-\hat{x}_0$, $\hat{y}\equiv y-\hat{y}_0$, and $p_k(z; \mu)$ are Schur polynomials defined in Eq.~(\ref{polypkzmu2}). At large $|A|$, the dominant contributions to the Laplace expansion of $\sigma$ in Eq.~(\ref{sigmanLap}) come from two index choices, $\nu =(0, 1, . . . , N-1)$ and $(0, 1, . . . , N-2, N)$. For the first index choice, using the above asymptotics (\ref{SkasymKP}) we get
\[
\det_{1 \leq i, j\leq N} \Phi_{i, \nu_j}=\alpha \hspace{0.06cm} A^{\frac{N(N+1)}{2}}U_{N}\left(z_0+A^{-1}\left(\hat{x}+2\textrm{i}\hat{y}-12t\right); \mu \right) \left[1+O\left(|A|^{-2}\right)\right],
\]
where $\alpha=2^{-N(N-1)/2}c_N^{-1}$. Since $z_0$ is a simple root of the Umemura polynomial $U_N(z; \mu)$, performing Taylor expansion to the above $U_N$ function, we get
\[
\det_{1 \leq i, j\leq N} \Phi_{i, \nu_j}=\alpha \hspace{0.06cm} A^{\frac{N(N+1)}{2}-1}U'_{N}(z_0; \mu) \left(\hat{x}+2\textrm{i}\hat{y}-12t\right)\left[1+O\left(|A|^{-1}\right)\right],
\]
where the prime represents differentiation to $z$, and $U'_{N}(z_0; \mu)\ne 0$. For the second index choice of $\nu =(0, 1, . . . , N-2, N)$, using the asymptotics (\ref{SkasymKP}) we get
\[
\det_{1 \leq i, j\leq N} \Phi_{i, \nu_j}=\frac{1}{2}\alpha \hspace{0.06cm}
A^{\frac{N(N+1)}{2}-1}U'_{N}\left(z_0; \mu \right) \left[1+O\left(|A|^{-1}\right)\right].
\]
Collecting these two dominant contributions, we get from Eq.~(\ref{sigmanLap}) that
\[\label{sigma2}
\sigma(x,y,t) = \alpha^2 \hspace{0.06cm} \left|U_{N}'(z_0; \mu)\right|^2 |A|^{N(N+1)-2} \left[ \left(x-\hat{x}_{0}-12t\right)^2+4\left(y-\hat{y}_{0}\right)^2 +\frac{1}{4} \right]  \left[1+O\left(|A|^{-1}\right)\right].
\]
Plugging this asymptotics into Eq.~(\ref{Schpolysolu}), we then see that this gives a fundamental lump $u_{1}(x-\hat{x}_{0}, \hspace{0.04cm}  y-\hat{y}_{0}, t)$, and the error of this prediction is $O(|A|^{-1})$.

Next, we prove that if $U(z; \mu)$ has a zero root of multiplicity $N_0(N_0+1)/2$, then when $t=O(1)$, the higher-order lump solution $u_N(x, y, t)$ reduces to a lower-order lump solution $u_{N_0}(x, y, t)$ with all-zero internal parameters in the $O(1)$ neighborhood of the spatial origin $(x, y)=(0, 0)$. In this proof, we will use the $3N\times 3N$ determinant form (\ref{3Nby3Ndet2}) of the $\sigma$ function. To do so, we notice that when $x^2+y^2=O(1)$ and $t=O(1)$, we have
\[ \label{SjsplitKP}
S_{k}(\textbf{\emph{x}}^{+} +\nu \textbf{\emph{s}}+\textbf{\emph{a}}) = \sum_{i=0}^k S_{k-i}(\textbf{\emph{a}})S_{i}(\textbf{\emph{x}}^{+} +\nu \textbf{\emph{s}}),
\]
which is the same relation as (\ref{Sjsplit}) in the NLS case except that the contents of the $\textbf{\emph{x}}^{+}$ vector are different here. The $\textbf{\emph{a}}$ vector here is the same as that in the NLS case. Thus, we also have $S_k(\textbf{\emph{a}})=A^k p_k(0; \mu)$ as in Eq.~(\ref{Sjw}) before.
Using these relations, we can rewrite the $\mathbf{\Phi}$ matrix in Eq.~(\ref{3Nby3Ndet2}) as $\mathbf{\Phi}=\mathbf{F}\mathbf{G}$, where
\begin{eqnarray}
&& \mathbf{F}=\mbox{Mat}_{1\le i\le N,\hspace{0.06cm} 1\le j\le 2N}\left(A^{2i-j}h_{2i-j}\right)= \mathbf{D}_1\hspace{0.02cm} \mathbf{H} \hspace{0.04cm} \mathbf{D}_2,  \\
&& \mathbf{G}=\mbox{Mat}_{1\le i, j\le 2N}\left(
2^{-(j-1)} S_{i-j}\left[\textbf{\emph{x}}^{+} + (j-1) \textbf{\emph{s}}\right]\right),
\end{eqnarray}
and the $\mathbf{D}_1, \mathbf{D}_2, \mathbf{H}$ matrices are as given in Eqs.~(\ref{defD1})-(\ref{defHN}). We have shown earlier in Sec.~\ref{secTheorem1} that the row echelon form of the $\mathbf{H}$ matrix is of the form given in Eq.~(\ref{defHtildeN}). Using this row echelon form and following the steps in Ref.~\cite{Yang2025NLS}, we will see that $\sigma(x, y, t)$ in Eq.~(\ref{3Nby3Ndet2}) asymptotically reduces to
\[ \label{sigmasigmahat2}
\sigma \sim \alpha_0 \hspace{0.04cm} \hat{\sigma}  \left[1+O\left(|A|^{-1}\right)\right],       \quad |A|\gg 1,
\]
where $\alpha_0$ is a certain positive constant,
\[ \label{sigmahat2}
\hat{\sigma}=\left|\begin{array}{cc}
\textbf{O}_{N_0\times N_0} & \widehat{\mathbf{\Phi}}_{N_0\times 2N_0} \\
-\widehat{\mathbf{\Psi}}_{2N_0\times N_0} & \textbf{I}_{2N_0\times 2N_0} \end{array}\right|,
\]
and
\[ \label{PhiPsihat2}
\widehat{\Phi}_{i,j}=2^{-(j-1)} S_{2i-j}\left[\textbf{\emph{x}}^{+} + (j-1+\mu) \textbf{\emph{s}}\right], \quad
\widehat{\Psi}_{i,j}=2^{-(i-1)} S_{2j-i}\left[(\textbf{\emph{x}}^{+})^* + (i-1+\mu) \textbf{\emph{s}}\right].
\]
Since the odd elements of the constant vector $\textbf{\emph{s}}$ are all zero in view of Eq.~(\ref{sexpand}), we can use techniques of Ref.~\cite{YangNLS2021} to remove the $\mu$ terms in the above equation (\ref{PhiPsihat2}) without affecting the $\hat{\sigma}$ determinant. Then, the resulting $\hat{\sigma}$  corresponds to the $N_0$-th order KPI lump solution $u_{N_0}(x,y,t)$ given by Eqs.~(\ref{Schpolysolu})-(\ref{defxkKP}) with all its internal parameters $(\hat{a}_1, \hat{a}_3, \cdots, \hat{a}_{2N_0-1})$ being zero, and the error of this $u_{N_0}(x, y, t)$ approximation can be seen from Eqs.~(\ref{Schpolysolu}) and (\ref{sigmasigmahat2}) as $O(|A|^{-1})$. This completes the proof of Theorem~\ref{Theorem3}. $\Box$

The above proofs on Umemura-related lump patterns are very similar to those on rogue patterns in the previous section. The reason is very simple. In the special higher-order lumps of Lemma~\ref{Lemma4}, their $\sigma$ function is very similar to the $\sigma_n$ function of the NLS equation in Lemma~\ref{Lemma2}. Because of that, it is not surprising that our treatments for the two cases are very similar.

\subsection{Numerical confirmation of Umemura-type lump patterns}

Now, we use two examples to numerically confirm the predictions of Theorem~\ref{Theorem3} on Umemura-type lump patterns. We take Umemura-type parameters (\ref{acondKP}) in the fifth-order lump solutions $u_5(x, y, t)$ of Lemma~\ref{Lemma4} with $(A, \mu)$ as $(10, 2)$ in the first example and $(20, 1/100)$ in the second example. Our predictions of their wavefields at $t=-1, 0$ and 1 from Theorem~\ref{Theorem3} are plotted in Fig.~\ref{figKP1}. These time values are chosen since Theorem~\ref{Theorem3} requires time to be $O(1)$, see Eq.~(\ref{acondKP}).

In the upper panels of Fig.~\ref{figKP1} for the first example, the predicted wavefields comprise 9 fundamental lumps on two concentric arcs, plus a third-order super lump at the arc center. These 9 fundamental lumps move at the same speed 12 toward the positive $x$-axis. Thus, in the moving frame $\hat{x}\equiv x-12t$ as used in these panels, these arcs are stationary. The third-order super lump at the arc center has a complex profile with extreme height of 96 at $t=0$ (see the middle panel). When $|t|$ moves away from 0, this wave complex splits into 6 approximate fundamental lumps in a triangular pattern (see the left and right panels). This triangle expands as $|t|$ increases, and its pointing directions at positive and negative times are opposite of each other (pointing rightward at negative times and leftward at positive times).

In the lower panels of Fig.~\ref{figKP1} for the second example, the predicted wavefields comprise 15 fundamental lumps located on two concentric rings and the ring center. These 15 fundamental lumps move at the same speed 12 toward the positive $x$-axis. Thus, in the moving frame $\hat{x}\equiv x-12t$ as used in these panels, these patterns of concentric rings with a ring center are stationary.

\begin{figure}[htb]
\begin{center}
\includegraphics[scale=0.48, bb=400 000 500 350]{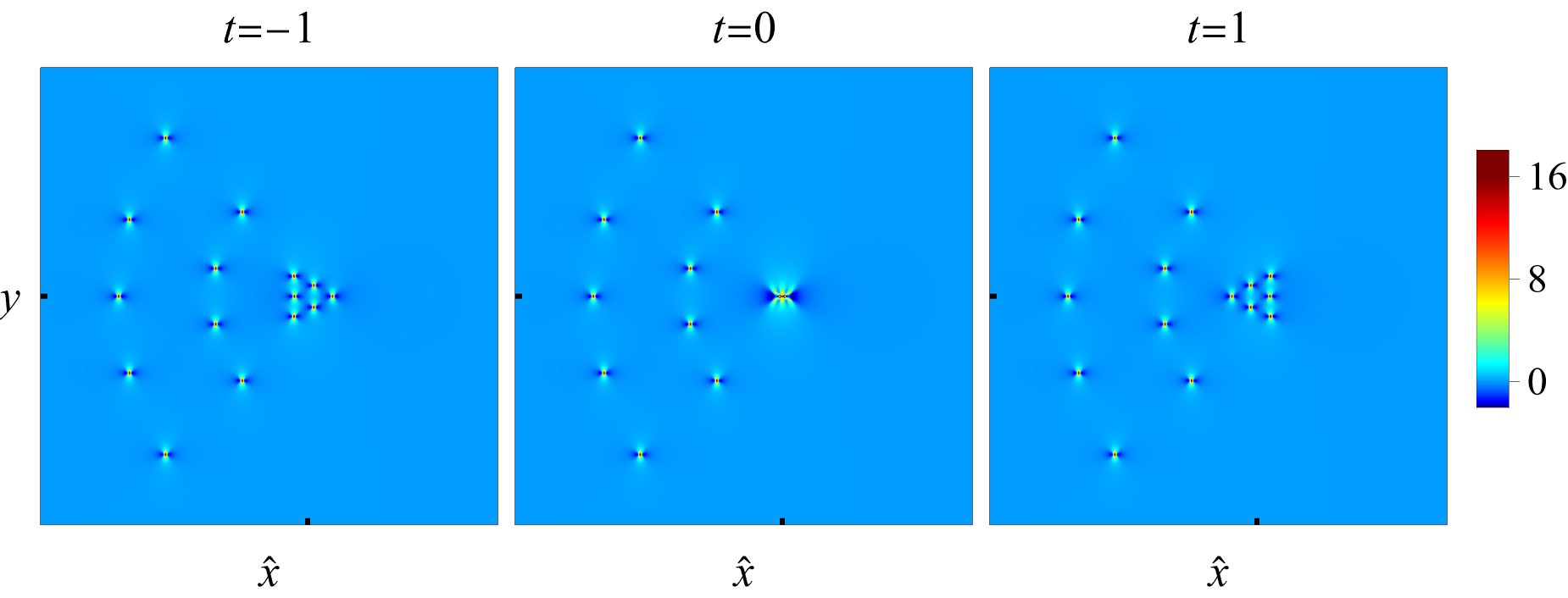}

\includegraphics[scale=0.48, bb=400 000 500 365]{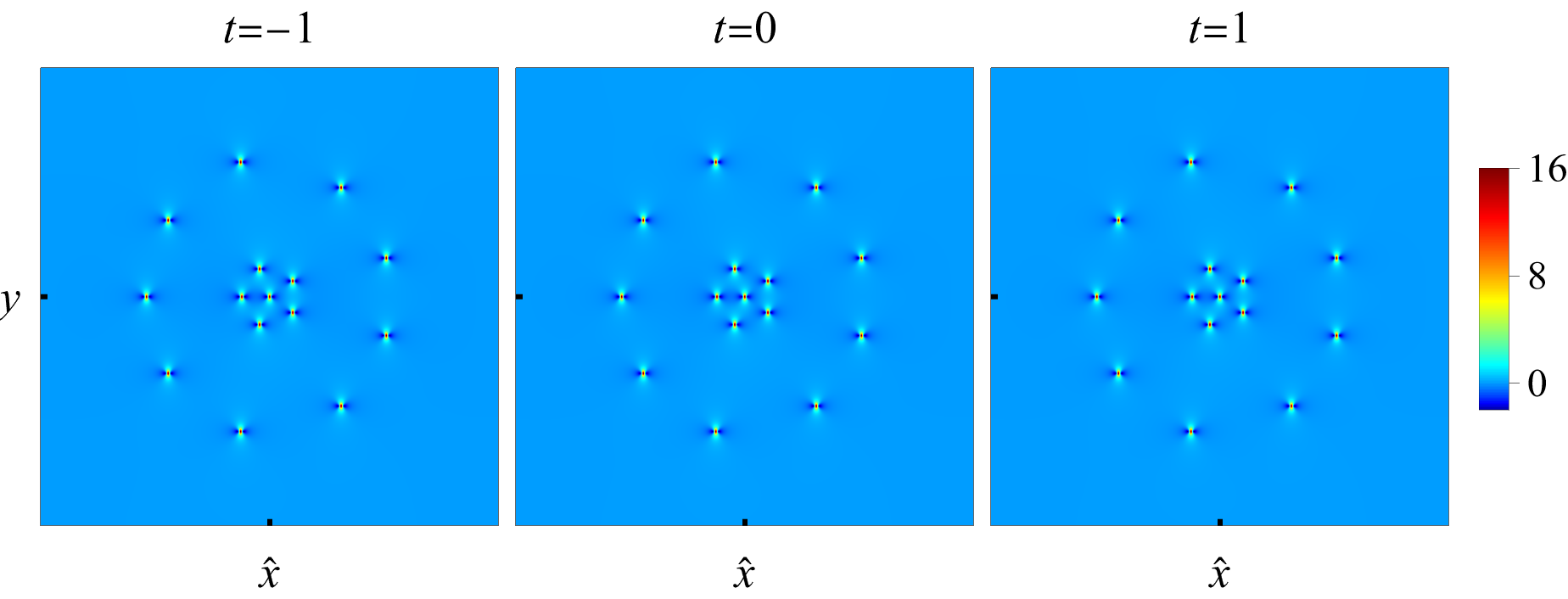}
\caption{Predicted KPI lump solutions $u_5(x, y, t)$ under Umemura-type parameters (\ref{acondKP}) at three time values of $t=-1, 0$ and $1$.  Upper row: $A=10$,\ $\mu=2$; lower row: $A=20$,\ $\mu=1/100$. The $(\hat{x}, y)$ intervals are $-70\leq \hat{x} \leq 50,\  -30 \leq y  \leq 30$ for the upper row,  and $-60 \leq \hat{x} \leq 60,\  -30 \leq y  \leq 30$ for the lower row, where $\hat{x}\equiv x-12t$ is the moving $x$-coordinate. For the upper middle panel,  the color bar does not show the full solution range. } \label{figKP1}
\end{center}
\end{figure}

Now, we verify these predictions. For the first example with $N=5, \mu=2$ and $A=10$, the true solutions $u_5(x, y, t)$ at six time values of $-100, -1, 0, 1, 50$ and 200 are displayed in Fig.~\ref{figKP2}. We can see that the wave pattern has gone through huge changes over these times. In the $t=-100$ panel, the wave field comprises 15 (approximate) fundamental lumps arranged in a triangular pattern pointing rightward. As time increases, the nine fundamental lumps vertically aligned on the far left of the wavefield bend over rightward and become two concentric arcs, while the remaining six fundamental lumps on a smaller right-pointing triangle continuously move toward each other (see the $t=-1$ panel). At $t\approx 0$, these six converging fundamental lumps coalesce, forming a complex lump structure with extreme height of approximately 95.19 (see the $t=0$ panel). Afterwards, that lump complex separates into a triangle of six fundamental lumps again but with the $x$-direction reversed --- pointing toward the left now (see the $t=1$ panel). This triangle of six fundamental lumps continues to expand, and when they meet the two arcs of 9 fundamental lumps on their left, a complex series of anomalous scatterings take place which throw the constituent fundamental lumps to new positions (see the $t=50$ panel). Eventually, these 15 fundamental lumps evolve into a big triangle again which points leftward, see the $t=200$ panel.

\begin{figure}[htb]
\begin{center}
\includegraphics[scale=0.48, bb=400 000 500 350]{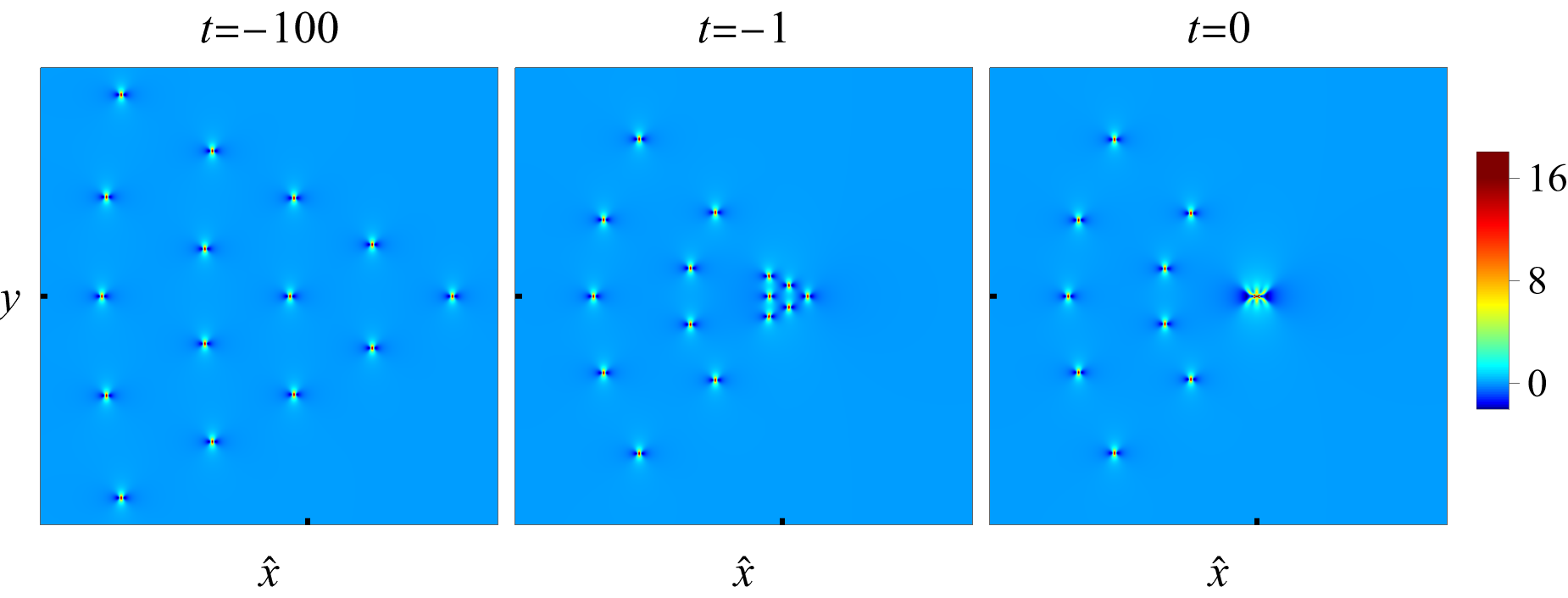}

\includegraphics[scale=0.48, bb=400 000 500 365]{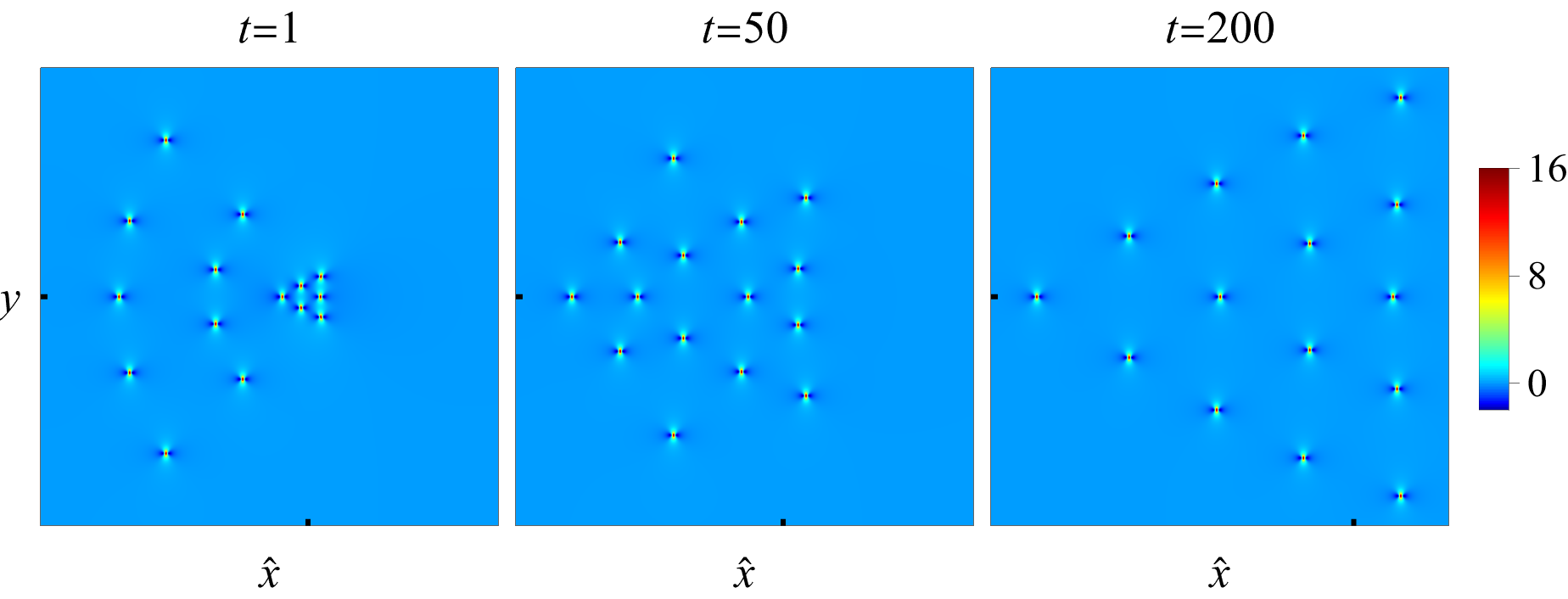}
\caption{True KPI lump solutions $u_5(x, y, t)$ under Umemura-type parameters (\ref{acondKP}) with $A=10$ and $\mu=2$ at six time values of $t=-100, -1, 0, 1, 50$ and 200. The $(\hat{x}, y)$ intervals are $-70\leq \hat{x} \leq 50,\  -30 \leq y  \leq 30$ for $t=-100, -1, 0, 1, 50$, and $-95\leq \hat{x} \leq 25,\  -30 \leq y  \leq 30$ for $t=200$, where $\hat{x}\equiv x-12t$ is the moving $x$-coordinate. These true solutions at $t=-1, 0$ and 1 are to be compared with their predictions in the upper row of Fig.~\ref{figKP1}. For the upper right panel,  the color bar does not show the full solution range.  } \label{figKP2}
\end{center}
\end{figure}

The big triangular patterns at large negative and positive times of $t=-100$ and 200 in Fig.~\ref{figKP2} have been predicted in our earlier work \cite{YangYangKPI}. Our focus here is the panels of $t=-1, 0$ and $1$ that fall into the parameter regimes of Theorem~\ref{Theorem3} and were predicted in the upper row of Fig.~\ref{figKP1}. By comparing these true solutions at $t=-1, 0$ and $1$ with those predicted ones in the upper row of Fig.~\ref{figKP1}, we can see that they closely match each other, confirming the validity of Theorem~\ref{Theorem3}. We have further conducted an error analysis for our predictions by measuring the errors of fundamental-lump predictions on the two arcs and the third-order super-lump prediction at the arc center versus the asymptotic parameter $|A|$, and this error analysis confirmed that the error indeed decays in proportion to $|A|^{-1}$ as Theorem~\ref{Theorem3} predicted. Details of this error analysis are omitted here for brevity.

\begin{figure}[htb]
\begin{center}
\includegraphics[scale=0.48, bb=400 000 500 350]{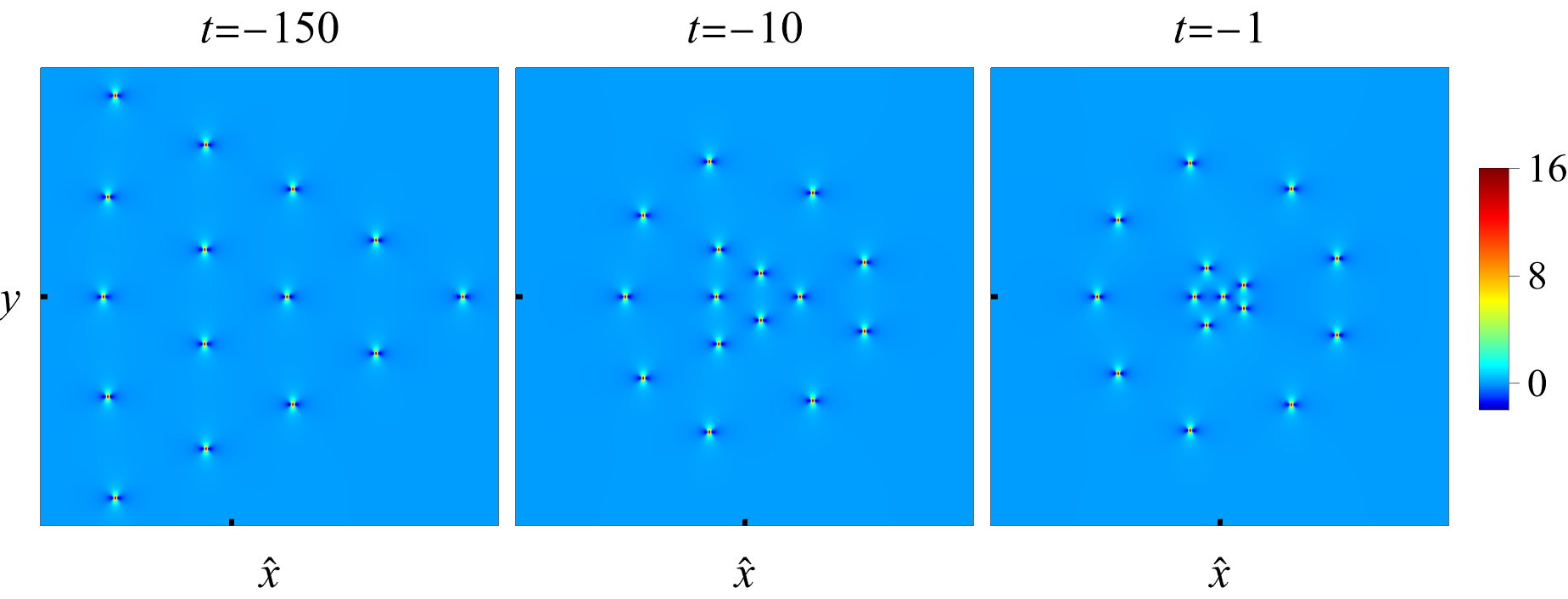}

\includegraphics[scale=0.48, bb=400 000 500 365]{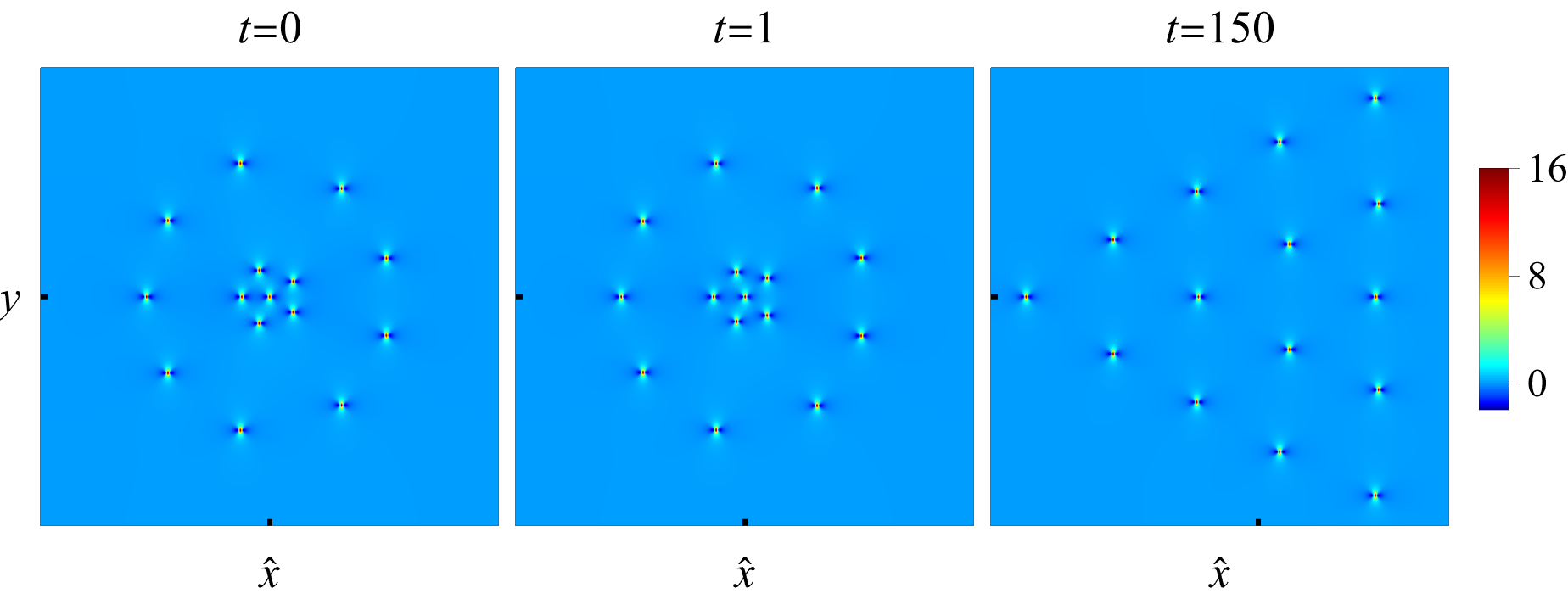}
\caption{True KPI lump solutions $u_5(x, y, t)$ under Umemura-type parameters (\ref{acondKP}) with $A=20$ and $\mu=1/100$ at six time values of $t=-150, -10, -1, 0, 1$ and $150$. The $(\hat{x}, y)$ intervals are $-50 \leq \hat{x} \leq 70,\  -30 \leq y  \leq 30$ for $t=-150$, $-60 \leq \hat{x} \leq 60,\  -30 \leq y  \leq 30$ for $t=-10, -1, 0, 1$, and $-70 \leq \hat{x} \leq 50,\  -30 \leq y  \leq 30$  for $t=150$, where $\hat{x}\equiv x-12t$ is the moving $x$-coordinate.  These true solutions at $t=-1, 0$ and 1 are to be compared with their predictions in the lower row of Fig.~\ref{figKP1}.   } \label{figKP3}
\end{center}
\end{figure}

\begin{figure}[htb]
\begin{center}
\includegraphics[scale=0.48, bb=200 000 500 380]{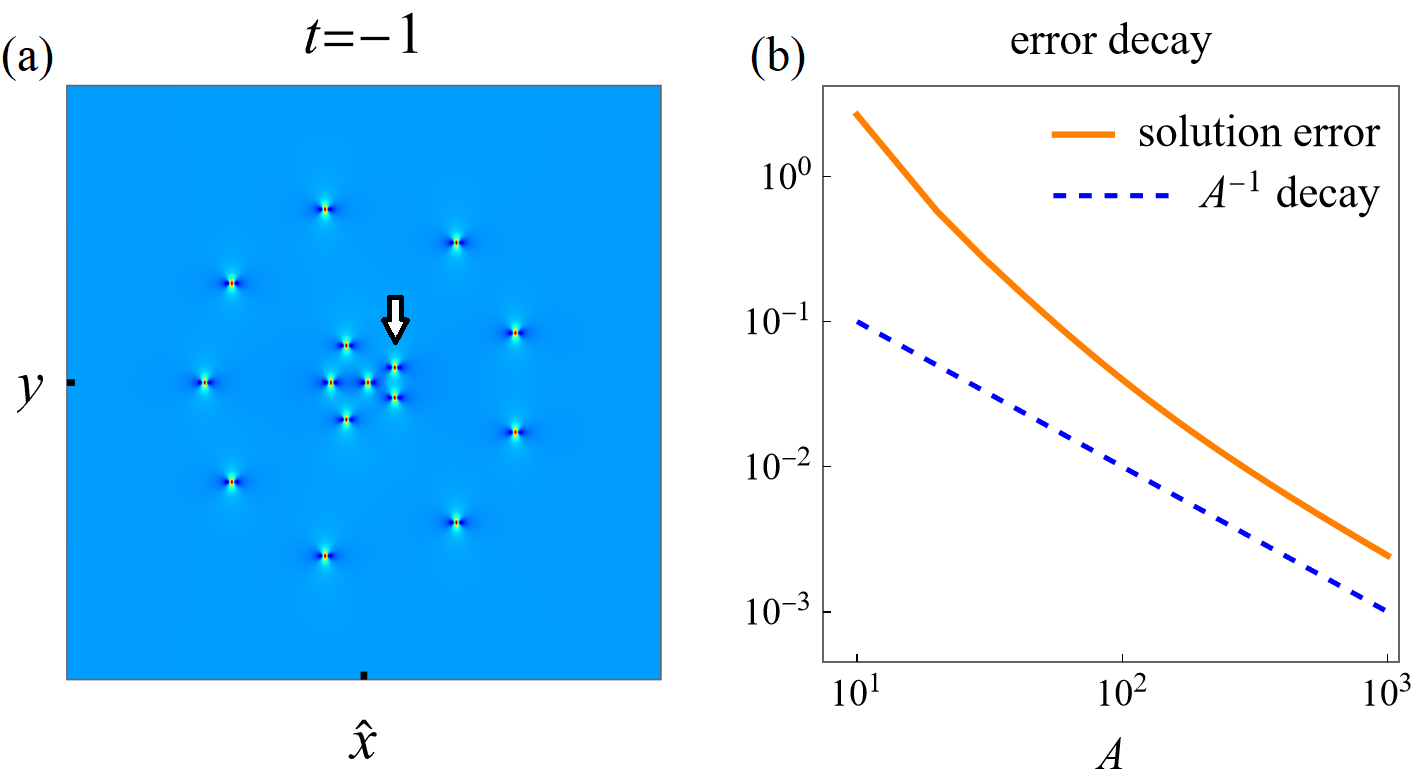}
\caption{Quantitative confirmation of Theorem~\ref{Theorem3} for the KPI lump solution in Fig.~\ref{figKP3} at $t=-1$. (a) Graph of this true solution at $A=20$. (b) Error diagram for the location of one of the fundamental lumps marked by a white arrow in panel (a) versus the $A$ value.  The predicted decay rate of $O(A^{-1})$ is also plotted for comparison. } \label{figKP4}
\end{center}
\end{figure}

To confirm our predictions for the second example in the lower row of Fig.~\ref{figKP1}, we plot in Fig.~\ref{figKP3} true KPI lump solutions $u_5(x, y, t)$ under Umemura-type parameters (\ref{acondKP}) with $A=20$ and $\mu=1/100$ at six time values of $t=-150, -10, -1, 0, 1$ and 150. In the $t=-150$ panel, the wave field comprises 15 (approximate) fundamental lumps arranged in a triangular pattern pointing rightward, similar to the $t=-100$ panel of the first example in Fig.~\ref{figKP2}. However, later evolution of this triangular pattern here is very different from that in Fig.~\ref{figKP2} earlier. In this case, as time increases, the 15 fundamental lumps on the triangle engage in a series of interactions and anomalous scatterings, which turn them into a very different configuration of a ring of 9 fundamental lumps at the outside and a right-pointing smaller triangle of 6 fundamental lumps inside, see the $t=-10$ panel. As time increases further, the outer ring of 9 fundamental lumps stays roughly the same, but the inner triangle of 6 fundamental lumps goes through delicate interactions and turns into a pentagon (a smaller ring) with a center, see the $t=-1$ panel. This configuration of two concentric rings with a center stays roughly the same in the $t=-1, 0$ and 1 panels. As time continues to increase, the 6 fundamental lumps in the inner ring and the center evolve into a small triangle again but pointing leftward. Later on, the outer ring of 9 fundamental lumps starts to interact with this inner triangle of 6 fundamental lumps and rearrange into a big triangle again but pointing leftward now, see the $t=150$ panel.

The big triangular patterns at large negative and positive times of $t=\pm 150$ in Fig.~\ref{figKP3} have been predicted in our earlier work \cite{YangYangKPI}. The panels of $t=-1, 0$ and $1$ are in the parameter regimes of Theorem~\ref{Theorem3} and were predicted in the lower row of Fig.~\ref{figKP1}. By visually comparing true solutions at $t=-1, 0$ and $1$ in Fig.~\ref{figKP3} with the predicted ones in Fig.~\ref{figKP1}, we see again that they match each other quite well, especially at $t=0$ and 1. At $t=-1$, the true solution shows some visible difference from the predicted one in the lower row of Fig.~\ref{figKP1} (left panel). For example, the pentagon near the center of the $t=-1$ panel in the true solution is a little deformed compared to that in the predicted solution. In order to check the quantitative accuracy of our Theorem~\ref{Theorem3}, we have performed an error analysis for this $t=-1$ solution with varying values of the asymptotic parameter $A$. Specifically, we pick one of the fundamental lumps in that pentagon marked by an arrow in Fig.~\ref{figKP4}(a), and record the error of its peak location in our predicted solution for $A$ values ranging from 10 to 1000. This error is measured as the distance from the true location of the peak to its predicted location, i.e., $\sqrt{(x_{pre}-x_{true})^2+(y_{pre}-y_{true})^2}$. The graph of this error versus the $A$ value is shown in Fig.~\ref{figKP4}(b). According to Theorem~\ref{Theorem3}, this error is predicted to be of $O(|A|^{-1})$ for $|A|\gg 1$. Fig.~\ref{figKP4}(b) shows that this error indeed decays at the rate of $A^{-1}$ at large $A$, which quantitatively confirms the validity of Theorem~\ref{Theorem3}. This error analysis also indicates that the visual difference between the deformed pentagon of the true solution and the more regular pentagon of the predicted solution at $t=-1$ is due to our asymptotic parameter $A=20$ for that solution not large enough. If $|A|$ gets larger, that deformed pentagon will approach the predicted more regular pentagon.

\section{Conclusion and discussions} \label{sec:conclusion}
In this paper, we have studied rogue-wave and lump patterns associated with Umemura polynomials in rational solutions of the third Painlev\'{e} equation $\mbox{P}_{\mbox{\scriptsize III}}$. We have shown that in many integrable equations such as the NLS equation and the Boussinesq equation, when internal parameters of their rogue wave solutions are large and of Umemura-type (\ref{acond2}), then their rogue patterns in the spatial-temporal plane would be asymptotically predicted by root structures of Umemura polynomials, where every simple root of the Umemura polynomial would induce a fundamental rogue wave whose spatial-temporal location is linearly related to that simple root, while a multiple root of the Umemura polynomial would induce a non-fundamental rogue wave in the $O(1)$ neighborhood of the spatial-temporal origin (see Theorems~\ref{Theorem1} and \ref{Theorem2}). We have also shown that in a certain class of higher-order lump solutions of the KPI equation, when their internal parameters are large and of Umemura-type (\ref{acondKP}), then their lump patterns at $t=O(1)$ would also be predicted asymptotically by root structures of Umemura polynomials, where simple and multiple roots of the polynomial would give rise to fundamental and non-fundamental lumps in the spatial plane, respectively (see Theorem~\ref{Theorem3}). We have compared all our analytical predictions of rogue-wave and lump patterns to true solutions and demonstrated good agreement between them. These results linked the third Painlev\'{e} equation $\mbox{P}_{\mbox{\scriptsize III}}$ to nonlinear wave patterns for the first time (to our best knowledge). In these studies of NLS rogue waves and KPI lumps, we have also discovered a new variable transformation (\ref{NLSKP}) that converts our NLS rogue waves of bilinear form to higher-order lumps of KPI.

It is noted that Umemura polynomials arise in rational solutions of the third Painlev\'{e} equation (\ref{PainleveIII}) in the generic case of $\gamma \delta \ne 0$. In the degenerate case of $\gamma \delta = 0$, the $\mbox{P}_{\mbox{\scriptsize III}}$ equation admits another class of algebraic solutions which are expressed through Ohyama polynomials \cite{Clarkson2003PIII, Ohyama2006, MillerD7, Harrow2025}. An interesting open question is whether there exist rogue-wave and lump patterns that can be predicted by those Ohyama polynomials of the degenerate $\mbox{P}_{\mbox{\scriptsize III}}$. This question lies outside the scope of this paper and will be left for future studies.

In this paper, we reported nonlinear wave patterns that are associated with the third Painlev\'{e} equation $\mbox{P}_{\mbox{\scriptsize III}}$. As we have reviewed in the introduction, nonlinear wave patterns associated with $\mbox{P}_{\mbox{\scriptsize I}}$, $\mbox{P}_{\mbox{\scriptsize II}}$ and $\mbox{P}_{\mbox{\scriptsize IV}}$ have been reported earlier as well. Then, what about $\mbox{P}_{\mbox{\scriptsize V}}$ and $\mbox{P}_{\mbox{\scriptsize VI}}$? That is, can one find nonlinear wave patterns that can be predicted by poles of solutions to $\mbox{P}_{\mbox{\scriptsize V}}$ and $\mbox{P}_{\mbox{\scriptsize VI}}$ (note that rational solutions to $\mbox{P}_{\mbox{\scriptsize V}}$ and $\mbox{P}_{\mbox{\scriptsize VI}}$ have been known, see \cite{ClarksonNIST, Clarkson_P5})?  This is another interesting question that lies outside the scope of this paper and will be left for future studies.

\section*{Declaration of competing interest}
The authors declare that they have no known competing financial interests or personal relationships that could have appeared to influence the work reported in this paper.

\section*{Acknowledgment}
The work of B.Y. was supported in part by the National Natural Science Foundation of China (Grant No. 12431008, 12201326).

\end{document}